\newcommand\reallywidehat[1]{%
\savestack{\tmpbox}{\stretchto{%
  \scaleto{%
    \scalerel*[\widthof{\ensuremath{#1}}]{\kern-.6pt\bigwedge\kern-.6pt}%
    {\rule[-\textheight/2]{1ex}{\textheight}}
  }{\textheight}%
}{0.5ex}}%
\stackon[1pt]{#1}{\tmpbox}%
}
\newcommand{\SU}{\mathrm{SU}}
\newtheorem{lemma}{Lemma}
\def\frac#1#2{{\begingroup #1\endgroup\over #2}}
\renewcommand{\leq}{\leqslant}
\def \addPandAUNM {Department of Physics and Astronomy, University of New Mexico, Albuquerque, NM, USA}
\def \addCQuIC {Center for Quantum Information and Control, University
  of New Mexico, Albuquerque, NM, USA}
\def \addLANL {Theoretical Division, Los Alamos National Laboratory, Los Alamos, NM 87545, USA}
\def \addJPMorgan{Global Technology Applied Research, JPMorganChase, New York, NY 10017, USA}
\begin{document}

\title{Quantum error correction-inspired multiparameter quantum metrology}

\author{Sivaprasad  Omanakuttan}
\email{sivaprasad.thattupurackalomanakuttan@jpmchase.com}
\affiliation{\addJPMorgan}
\affiliation{\addCQuIC} \affiliation{\addPandAUNM} 
\author{Jonathan A. Gross }
\email[]{jarthurgross@google.com}
\affiliation{Google Quantum AI, Venice, CA 90291, USA
}
\author{T.J.\,Volkoff}
\email{volkoff@lanl.gov}
\affiliation{\addLANL}

\begin{abstract}
We present a novel strategy for obtaining optimal probe states and measurement schemes in a class of noiseless multiparameter estimation problems with symmetry among the generators.
The key to the framework is the introduction of a set of quantum metrology conditions, analogous to the quantum error correction conditions of Knill and Laflamme, which are utilized to identify probe states that saturate the multiparameter quantum Cram\'{e}r-Rao bound.
Similar to finding two-dimensional irreps for encoding a logical qubit in error correction, we identify trivial irreps of finite groups that guarantee the satisfaction of the quantum metrology conditions.
To demonstrate our framework, we analyze the SU(2) estimation with symmetric states in which three parameters define a global rotation of an ensemble of $N$ qubits.
For even $N$, we find that tetrahedral symmetry and, with fine-tuning, $S_{3}$ symmetry, are minimal symmetry groups providing optimal probe states for SU(2) estimation, but that the quantum metrology conditions can also be satisfied in an entanglement-assisted setting by using a maximally entangled state of two spin-$N/2$ representations for any $N$.
By extending the multiparameter method of moments to non-commuting observables, we use the quantum metrology conditions to construct a measurement scheme that saturates the multiparameter quantum Cram\'{e}r-Rao bound for small rotation angles.
\end{abstract}
\maketitle

\section{Introduction\label{sec:intro}}
Estimation of dynamical parameters of quantum systems, i.e., quantum metrology \cite{Braunstein_caves_metrology,paris2009quantum,Tóth_2014,Liu_2020}, holds significant importance for a variety of quantum technology applications, including frequency spectroscopy using atomic clocks \cite{RevModPhys.87.637}, gravitational wave detection \cite{PhysRevD.65.022002}, and electric and magnetic field estimation in NV centers \cite{bian2021nanoscale,doi:10.1126/science.aam7009}.
The general goal of quantum metrology is to estimate unknown parameters with the highest possible precision.
Typically, a one-shot quantum estimation protocol involves four key steps \cite{Liu_2020}: (1) preparing the probe state, (2) parametrization wherein the parameters to be estimated are encoded in the probe state, (3) measuring the parameterized probe state, and (4) classical estimation. 
Because the final step is covered by the classical statistical theory, quantum metrology primarily focuses on optimizing the first three steps in order to minimize the local estimation error.

Lower bounds on the local estimation error have been extensively studied, resulting in quantum generalizations of the Cram\'{e}r-Rao bound \cite{helstrom1969quantum,holevo}.
Within the framework of these quantum Cram\'{e}r-Rao bounds, quantum generalizations of the Fisher information matrix provide quantities that define the precision limits for both single-parameter and multiparameter estimation.
For example, the SLD quantum Fisher information matrix provides the achievable quantum Cram\'{e}r-Rao bound for estimating any single parameter in isolation~ \cite{Braunstein_caves_metrology}.
In the case of estimating the angle of an SU(2) rotation about an axis on a system of $N$-qubits, preparing an equal-weight superposition of the highest and lowest eigenvectors of the rotation generator (i.e., a GHZ state) and making a parity measurement saturates this bound~ \cite{PhysRevA.54.R4649}.

By contrast, many important scenarios, including microscopy, optics, electromagnetic studies, gravitational field imaging, and spectroscopy, involve the estimation of multiple parameters associated with terms in the Hamiltonian that do not commute. 
Thus there is growing interest in multiparameter problems, which can make use of many degrees of freedom of a quantum probe system. 
For example, given local qubit Hamiltonians $H_{k}$ with tunable coupling parameters $\theta_{k}$, algorithms exist which estimate the $\theta_{k}$ in a parametrized state $e^{-i\sum_{k}\theta_{k}H_{k}}\rho e^{i\sum_{k}\theta_{k}H_{k}}$ to an error $O(1/N)$  with high probability \cite{PhysRevLett.130.200403}.
It is therefore of increasing practical importance to identify and generate optimal probe states that allow multiparameter estimation algorithms to be carried out with the highest sensitivity and efficiency.

In this setting of multiparameter estimation, formulating achievable lower bounds for classes of estimation problems remains a 
challenging task.
For instance, the multiparameter SLD quantum Cram\'{e}r-Rao bound is only achievable in very specific models. Alternative descriptions of the tangent space can lead to achievable quantum Cram\'{e}r-Rao bounds, as in the problem of the Gaussian state model \cite{holevo} in which the quantum Cram\'{e}r-Rao bound based on right logarithmic derivative operators is achievable and strictly greater than the SLD quantum Cram\'{e}r-Rao bound.
More generally, in multiparameter estimation one must resort to tighter, non-explicit quantum Cram\'{e}r-Rao bounds like the Nagaoka-Hayashi bound \cite{nagaoka,hayashi}, or Holevo Cram\'{e}r-Rao bound, or seek a well-motivated measurement scheme that can be proven to locally satisfy the SLD quantum Cram\'{e}r-Rao bound.

Quantum error correction is an important tool to protect information from noise. 
In quantum error correction, one encodes information in a small logical Hilbert space, which is a subspace of an often much larger physical Hilbert space that is protected from certain physical errors in the sense that an appropriate correction channel exists for these errors \cite{PhysRevA.55.900}. 
Quantum error correction techniques have been used to explore the possibility of achieving the Heisenberg limit for parameter estimation in the presence of noise \cite{PhysRevLett.112.150802,Zhou2018,PhysRevA.100.022312,dur2014,PhysRevLett.112.150801,PhysRevLett.112.150802,PhysRevLett.122.040502,Gorecki2020optimalprobeserror}. In these works, the errors have the straightforward interpretation as occurring due to non-unitary quantum dynamics.   
In this work, we connect the concept of quantum error correction to multiparameter quantum metrology in a noiseless setting, using the well-studied criteria for quantum error correction to motivate criteria for multiparameter quantum metrology that indicate the optimality of a probe state quantified by the mean squared error of the parameter estimates.

In particular, we derive a set of conditions that identify optimal states for estimation of parameters coupled to generators that do not commute but are related by a discrete commutative symmetry.
These conditions mirror the Knill-Laflamme criteria \cite{PhysRevA.55.900} for pure quantum error correcting codes, and further motivate measurement schemes that locally saturate the SLD quantum Cram\'{e}r-Rao bound.
Notably, in our setting, optimal multiparameter quantum metrology can be viewed as a form of quantum error correction where the objective is to encode a state that is robust to errors given by the generators of the quantum metrology problem while maintaining large-as-possible variance with respect to these generators.

As a specific example of this, we consider the three-parameter $\SU(2)$ estimation problem, where the goal is to estimate the three parameters coupling to $J_{x}$, $J_{y}$, and $J_{z}$ that uniquely identify a $\SU(2)$ operator \cite{PhysRevLett.116.030801}. 
This problem is analogous to a multiparameter problem where we encode information in all three directions of rotations. 
Previous work in this direction considered the compass state as an ideal candidate state for $\SU(2)$ parameter estimation \cite{PhysRevLett.116.030801}.
However, these states are not optimal in general representations of $\SU(2)$.
We approach this problem by using the notion of quantum metrology conditions to identify SU(2) representations \cite{serre1977linear} that possess optimal states, finding that while the compass state satisfies the quantum metrology condition in spin-$N/2$ representations with $N\equiv 0 \mod 8$, other representations contain optimal probe states which are not of the compass state form.
Our symmetry-based approach to satisfying the quantum metrology condition, which associates the quantum metrology condition to a trivial irrep of a finite group that contains the symmetry of the generator set, allows us to identify the number of optimal orthogonal states in a given representation.

One straightforward way to guarantee the satisfaction of our $\SU(2)$ quantum metrology condition is to instantiate any $\SU(2)$ representation as a tensor product and use a maximally entangled state.
Such an entanglement-assisted method was first demonstrated for spin-${1\over 2}$ representation in early work \cite{PhysRevA.65.012316}.
Extending this insight, we establish the general validity of utilizing entanglement for $\SU(2)$ parameter estimation.

In quantum metrology, finding an optimal measurement scheme is as important as finding an optimal state. It constitutes a challenge in multiparameter metrology due to the non-commuting generators involved.
However, it is possible to correct non-commuting errors in quantum error correction as long as the Knill-Laflamme conditions are satisfied. 
Using these insights one can use the quantum metrology conditions derived in this paper to find optimal measurement schemes for $\SU(2)$ parameter estimation which locally saturates the SLD quantum Cram\'{e}r-Rao bound.
Our measurement scheme works particularly well for small angles of rotation and can be straightforwardly extended to other multiparameter estimation schemes.

The remainder of the article is organized as follows.
In \cref{sec:Multi_parameter_estimation} we define the multiparameter estimation problem and derive the condition satisfied by optimal states for $\SU(2)$ multiparameter estimation.
In \cref{sec:SU_2_estimation}, we analyze in detail the structure of optimal states for the $\SU(2)$ estimation problem according to the symmetry group that guarantees their optimality.
In \cref{sec:entanglement_as_a_resource}, we explain how maximally entangled states of two isomorphic $\SU(2)$ representations provide optimal resources for the $\SU(2)$ estimation problem.
In  \cref{sec:optimality_as_a_function_of_theta}, we study the behavior of the quantum Fisher information matrix (QFIM) away from the small rotation limit and find the family of optimal states for this case.
In \cref{sec:measurement_that_saturates_QFI}, we use the quantum metrology condition to find a measurement that locally saturates the QFIM for $\SU(2)$ estimation. 
We conclude and explore possible future directions in \cref{sec:discussions_and_future_work}.
\vspace{0.7cm}

\section{SU(2) Estimation and the Quantum Metrology Conditions}
\label{sec:Multi_parameter_estimation}
In this section, we develop a framework that allows us to identify optimal probe states for the estimation of three real $\SU(2)$ parameters in any irreducible $\SU(2)$ representation. The outcome of the framework is a set of conditions, which turn out to be a special case of the Knill-Laflamme quantum error correction conditions, that characterize optimal probe states. Consider the group $\SU(2)$, of dimension $3$, and an irreducible representation $V$.
Our general multiparameter estimation problem is the $\SU(2)$ shift model over quantum states of $V$ \cite{holevo}
\begin{equation}
    \rho_{\vec{\theta}}= e^{-i\vec{\theta}\cdot \vec{J}}\rho e^{i\vec{\theta}\cdot \vec{J}}
    \label{eqn:probe1}
\end{equation}
where $\vec{J}=(J_{x},J_{y},J_{z})$ is a killing orthonormal basis of the Lie algebra $\mathfrak{su}(2)$ and $\vec{\theta}$ is the real vector of parameters to be estimated. 
Since we have chosen a representation $V$, $J_{j}$ should be interpreted as its corresponding matrix in that representation. Because we will assign equal cost to errors in each component of the estimate $\hat{\vec{\theta}}$ of $\vec{\theta}$ by using the deviation function 
\begin{equation}
    W_{\vec{\theta}}(\hat{\vec{\theta}}) = \Vert \hat{\vec{\theta}}-\vec{\theta} \Vert^{2},
    \label{eqn:dev}
\end{equation}
it would be natural to demand invariance of the probe state $\rho$ under operations that permute the generators.
However, the only symmetry of the generators that is implemented by $\SU(2)$ is the cyclic symmetry $Z_{3}$, so we demand that the probe state $\rho$ satisfies $C\rho C^{\dagger}=\rho$, where $C\in \SU(2)$ implements the cyclic permutation of $J_{x}$, $J_{y}$, $J_{z}$.
This minimal assumption can also be stated as $Z_{3}$ covariance of the model (\ref{eqn:probe1}), namely $\rho_{C^{-1}\vec{\theta}}=C\rho_{\theta}C^{\dagger}$, where $C$ acts on the components of $\theta$ in the same way as $\text{Ad}_{C}: \mathfrak{su}(2)\rightarrow \mathfrak{su}(2)$ acts on the components of $\vec{J}$.

The $Z_{3}$ covariance results in a substantial simplification of the SLD QFIM appearing in the quantum Cram\'{e}r-Rao bound. Recall that the parametrized state manifold $\lbrace \rho_{\vec{\theta}}\rbrace_{\vec{\theta}}$ has tangent space spanned by the symmetric logarithmic derivatives (SLD) $L_{\vec{\theta}}^{(j)}$, $j=1,\ldots,3$. The SLD QFIM $F(\vec{\theta})$ defines the metric on this manifold and has matrix elements given by \cite{holevo}:\begin{equation}
    F(\vec{\theta})_{i,j}:={1\over 2}\text{tr}\rho_{\vec{\theta}}[L^{(i)}_{\vec{\theta}} ,L^{(j)}_{\vec{\theta}}]_{+}
\end{equation}
 where the $[\cdot,\cdot]_{+}$ is the anti-commutator.  The quantum Cram\'{e}r-Rao bound (QCRB) based on symmetric logarithmic derivatives is given in matrix and scalar forms by
\begin{align}
\Sigma(\vec{\theta}) \ge F(\vec{\theta})^{-1} \label{eqn:qcrb} \\
E(W_{\vec{\theta}}(\hat{\vec{\theta}})) \ge \text{tr}[F(\vec{\theta})^{-1}] \label{eqn:trqcr}
\end{align}
where $\Sigma(\vec{\theta})$ is the covariance matrix $E(\vec{\Xi} \vec{\Xi}^{T})$ of the vector of deviations $\vec{\Xi}= (\hat{\theta}_{1}-\theta_{1},\hdots,\hat{\theta}_{3}-\theta_{3})^{T}$ obtained from a locally unbiased estimator $\hat{\vec{\theta}}$.
The inequality (\ref{eqn:trqcr}) is the scalar version of (\ref{eqn:qcrb}) obtained by taking the trace of both sides as $3 \times 3$ matrices \cite{PhysRevA.95.012305}. The minimal value of $\text{tr}[F^{-1}]$ is obtained on pure states, so the minimal QCRB is on the pure state manifold, justifying our subsequent restriction  to pure probe states $\rho$ in the rest of this work. With this restriction,
an explicit expression for the SLD operators is
\begin{equation}    L^{(j)}_{\vec{\theta}}:=2\partial_{\theta_{j}}\rho_{\vec{\theta}}.
\end{equation}
At $\vec{\theta}=\vec{0}$,
\begin{equation}
L^{(j)}_{\vec{0}} = -2iJ_{j}\rho_{\vec{0}} + h.c.
\label{eqn:sldzero_Ham}
\end{equation}
for $j=1,\ldots,3$. It is then straightforward to verify that the QFIM takes the following form at $\vec{\theta}=\vec{0}$ due to the $Z_{3}$ symmetry of $\rho$:
\begin{equation}
    F(\vec{0})=\begin{pmatrix}
        c & b &b\\
        b& c&b\\
        b&b&c 
    \end{pmatrix},
    \label{eqn:su2qfim}
\end{equation}
where $b\in \mathbb{R}_{+}$ and $c\in \mathbb{R}$.
The above matrix is a symmetric circulant matrix with eigenvalues
\begin{equation}
    \begin{aligned}
        \lambda_0&=c+2b,\\
        \lambda_i&=(c-b), \text{ for } 1\leq i\leq 2.
    \end{aligned}
\end{equation}
Thus one finds that,
\begin{equation}
    \mathrm{tr} (F(0)^{-1})=\frac{2}{c-b}+\frac{1}{c+2b}
\end{equation}
which is minimized for $b=0$ at the value ${3\over c}$, where
\begin{equation}
    c=4\text{Var}_{\ket{\psi}}J_{i} \; , \; i=1,\ldots, 3.
    \label{eqn:elel}
\end{equation}
We have therefore identified the structure of the QFIM of the general form (\ref{eqn:su2qfim}) which has the lowest value of $\text{tr}[F(0)^{-1}]$. 
By reading off the expectation values that define the matrix elements of 
\begin{equation}
\begin{aligned}
    F(0)_{i,j}&=2\left(\bra{\psi}J_i J_j\ket{\psi}+\bra{\psi}J_j J_i\ket{\psi}\right.\\   
    &\left.- 2\bra{\psi}J_i\ket{\psi}2\bra{\psi}J_j\ket{\psi} \right),
    \end{aligned}
\end{equation}
 this result provides us with a set of conditions characterizing the pure, $Z_{3}$ covariant probe states $\rho:=\ket{\psi}\bra{\psi}$ that obtain this minimal value in the spin-${N\over 2}$ representation
\begin{equation}
\begin{aligned}
    \langle J_i \rangle &=0\\
    \langle J_i J_l \rangle &=0 \, , \, i\neq l \\
      \langle J_i^2 \rangle &=\frac{{N\over 2}({N\over 2}+1)}{3},\\
      \label{eq:condition_theta_0}
\end{aligned}    
\end{equation}
where in the last condition, we combined the first condition with $Z_{3}$ covariance to uniquely determine the value of $c$ in (\ref{eqn:elel}) according to the $\mathfrak{su}(2)$ Casimir invariant.
The above conditions are a specific case of the Knill-Laflamme conditions of quantum error correction \cite{PhysRevA.55.900,bacon_2013}
\begin{equation}    \bra{\psi_{i}}E_a^{\dagger}E_b\ket{\psi_{j}}=C_{ab}\delta_{ij}
\label{eq:KL_condition}
\end{equation}
for error correction of a certain set of Kraus operators $\{E_a\}$, where $\lbrace \ket{\psi_{i}}\rbrace_{i}$ is an orthonormal basis of the code space. Specifically, assuming: 1. the set of Kraus operators to be $\{\mathds{1}\} \cup \{ J_{k} \}_{k}$, and 2. a code matrix $C=1\oplus {c\over 4}\mathbb{I}$, one finds that (\ref{eq:KL_condition}) reduces to \cref{eq:condition_theta_0}.
With these assumptions, the linear span of states satisfying the conditions \cref{eq:condition_theta_0} is a pure quantum error correction code (i.e., one for which the code space is contained in the kernel of every $E_{a}^{\dagger}E_{b}$ for which $a\neq b$ \cite{613213}), and the optimal multiparameter metrology condition \cref{eq:condition_theta_0} is succinctly written as
\begin{equation}\bra{\psi}E_a^{\dagger}E_b\ket{\psi}={c\over 4}\delta_{a,b}-\left({c\over 4}-1\right)\delta_{a,1}\delta_{b,1}\label{eq:metrology_condition}.
\end{equation}
Thus the question of finding a probe state that exhibits the minimal value $\text{tr}[F(0)^{-1}]$ of the SLD QCRB over $Z_{3}$ covariant probe states amounts to finding a state that satisfies \cref{eq:metrology_condition}.
Note that in an SU(2) irrep $V$, the value of $c$ is determined by the value of the Casimir element $\sum_{i=1}^{3}J_{i}^{2}$ in $V$.
In Section \ref{sec:SU_2_estimation} we will use this fact to show that our restriction to $Z_{3}$ covariant probe states achieves the minimal value of $\text{tr}[F(0)^{-1}]$ overall probe states in the $\SU(2)$ irrep $V$.

Note that in the multiparameter estimation setting, the SLD QCRB is not guaranteed to be satisfied by any measurement for extracting an estimate $\hat{\vec{\theta}}$. 
For the existence of a measurement that saturates (\ref{eqn:qcrb}) at a given $\vec{\theta}$, it is necessary and sufficient that the SLD operators locally commute \cite{matsu}.
For a pure probe $\rho_{\vec{\theta}}=\ket{\psi(\vec{\theta})}\bra{\psi(\vec{\theta})}$, satisfaction of the weaker conditions
\begin{align}
&{} \text{1. } \text{tr}\rho_{\vec{\theta}}[L^{(i)}_{\vec{\theta}},L^{(j)}_{\vec{\theta}}] =0 \label{eqn:cond1} \\
&{} \text{2. } F(\vec{\theta}) \text{ invertible }
\label{eqn:cond2}
\end{align}
is necessary and sufficient for the existence of a measurement that saturates (\ref{eqn:qcrb}) in a single shot \cite{PhysRevA.94.052108}.
For pure probes, the condition (\ref{eqn:cond1}) is equivalent to the condition that $\text{Im}\left( \partial_{\theta_{i}}\ket{\psi(\vec{\theta})},
\partial_{\theta_{j}}\ket{\psi(\vec{\theta})} \right)=0$ for all $i,j$ \cite{PhysRevLett.119.130504}. At $\vec{\theta}=0$, a probe state satisfying \cref{eq:condition_theta_0} also satisfies (\ref{eqn:cond1}), and (\ref{eqn:cond2}) is then satisfied for any $c>0$.

Lastly, we outline here the strategy that we implement in Section \ref{sec:SU_2_estimation} to identify simple probe states that satisfy \cref{eq:condition_theta_0} in the irrep $V$.
The method consists of identifying a small finite subgroup $G$ of $\SU(2)$ that possesses two properties: 1. it contains the $Z_{3}$ subgroup generated by $C$, 2. its trivial representation contained in the irrep $V$ of $\SU(2)$ consists of states that satisfy \cref{eq:condition_theta_0}. A probe state $\ket{\psi}$ taken from that trivial representation is then guaranteed to achieve the minimal value of $\text{tr}[F(0)^{-1}]$, which we use as our quantifier of optimality.
In the next section we look at two examples of such $G$, show how the optimal multiparameter metrology conditions are satisfied in the trivial irreps of $G$, and compare optimal states obtained from different choices of $G$.


Although our main focus in this work is on the estimation of $\SU(2)$, one can find similar examples with more generators in which quantum metrology conditions analogous to \cref{eq:condition_theta_0} allow one to identify optimal probe states in a representation.
In \cref{sec:qudits_d_4}, we discuss a four-parameter estimation problem in which the parameters are coupled to a subset of generalized Gell-Mann matrices in the Lie algebra $\mathfrak{su}(4)$. There we implement the same general strategy for identifying optimal probe states by using appropriately generalized quantum metrology conditions.

\section{Bounds on SU(2) estimation and optimal states}
\label{sec:SU_2_estimation}
We now consider our main problem of $\mathrm{SU}(2)$ estimation in a specific representation.
This problem can be alternatively phrased as the task of estimating all three components of a magnetic field using two-level systems.
The Hamiltonian for this case is,
 \begin{align}
    H(\vec{\theta})&=\vec{\theta}\cdot \vec{J}
    \label{eqn:ooo}
\end{align}
where $[J_{i},J_{j}]=i\epsilon_{ijk}J_{k}$ is the $\mathfrak{su}(2)$ algebra.

We will use the notation $\text{tr}[(\cdot)\rho_{\vec{\theta}}]$ and $\langle \cdot \rangle_{\rho_{\vec{\theta}}}$ interchangeably for expectation values, and write the vector of spin operators $\vec{J}=(J_{1},J_{2},J_{3})=(J_{x},J_{y},J_{z})$. 
The three parameters $\vec{\theta}$ are imprinted on a probe state $\rho$ according to $\rho_{\vec{\theta}}=U(\vec{\theta})\rho U(\vec{\theta})^{\dagger}$, where   $U(\vec{\theta}):=\exp(-i H(\vec{\theta}))$ and $\rho$ is a quantum state in a spin-$N/2$ representation of $\SU(2)$, which can be interpreted as a symmetric state of $N$ qubits. For simple probe states $\rho$ such as Dicke states, the Riemannian geometry of the pure state manifold defined by (\ref{eqn:ooo}) can be analyzed using a canonical method \cite{PhysRevA.48.4102}. For pure $\rho$, the symmetric logarithmic derivatives (SLD) of the problem are given by
\begin{align}
L^{(j)}_{\vec{\theta}}&:=2\partial_{\theta_{j}}\rho_{\vec{\theta}} \nonumber \\
&= 2i\left[ {\sin \Vert \vec{\theta}\Vert\over \Vert \vec{\theta}\Vert}  J_{j} +\left(1-{\sin\Vert\vec{\theta}\Vert\over \Vert\vec{\theta}\Vert} \right) {\theta_{j}\over \Vert \vec{\theta}\Vert^{2}}\vec{\theta}\cdot \vec{J} \right. \nonumber \\
&{} \left. - 2{\sin^{2}{ \Vert \vec{\theta}\Vert\over 2}\over \Vert \vec{\theta}\Vert^{2}}(\vec{e}_{j}\times \vec{\theta})\cdot \vec{J} \right] \rho_{\vec{\theta}} + h.c.
\label{eqn:sld}
\end{align}
where $\vec{e}_{j}$ is the unit vector in direction $j$. See  \cref{sec:app1} for a derivation of (\ref{eqn:sld}). 
At $\vec{\theta}=0$,
\begin{equation}
L^{(j)}_{0} = -2iJ_{j}\rho + h.c.
\label{eqn:sldzero}
\end{equation}

The SLD QFIM $F(\vec{\theta})$ has matrix elements $F(\vec{\theta})_{i,j}:={1\over 2}\text{tr}\rho_{\vec{\theta}}[L^{(i)}_{\vec{\theta}} ,L^{(j)}_{\vec{\theta}}]_{+}$ \cite{holevo}. 
From the fact that the harmonic mean is a lower bound on the arithmetic mean, one finds that (suppressing $\vec{\theta}$)
\begin{align}\text{tr}F(\vec{\theta})^{-1}&\ge {3\over \text{Tr}F(\vec{\theta})}\nonumber \\
 &\ge {9\over 4\sum_{i=1}^{3}\text{Var}_{\rho_{\vec{\theta}}}J_{i}} \nonumber \\
    &\ge {9\over N^{2}+2N}
\label{eqn:aaa}
\end{align}
where  $\text{Tr}:={1\over N+1}\text{tr}$ is the normalized trace in the symmetric subspace and the last line uses the Casimir expectation value $\sum_{i=1}^{3}\langle J_{i}^{2}\rangle={N\over 2}\left({N\over 2}+1\right)$ to bound the variance.
This last inequality is saturated by taking any pure state with $\langle \vec{J}\rangle=0$.
We can in fact saturate all inequalities in \cref{eqn:aaa} by using a state satisfying the quantum metrology conditions given in \cref{eq:condition_theta_0}.

Note that for a symmetric state $\rho$, the QFIM depends only on the one-qubit and two-qubit reduced states \cite{PhysRevLett.116.030801}
\begin{align}
        F(0)_{ij}&= N(N-1) \text{tr}\left[\rho^{(2)}\left( \sigma_{i}\otimes \sigma_{j} \right)\right] \nonumber \\
        &- N^{2}\text{tr}\left[\rho^{(1)}\sigma_{i}\right]\text{tr}\left[ \rho^{(1)}\sigma_{j}\right]+N\delta_{i,j}
    \end{align}
where $\sigma_{i}$, $i=1,2,3$, are the Pauli matrices.
Satisfaction of the quantum metrology conditions  \cref{eq:condition_theta_0} in the spin-$N/2$ is expressed via the conditions
\begin{equation}
     \text{tr}\left[\rho^{(2)} (\sigma_{i}\otimes \sigma_{j}) \right] = \begin{cases}0 &  i\neq j  \\
    {1\over 3} & i=j \end{cases}.
    \label{eqn:utut}
\end{equation}
In Section \ref{sec:optimality_as_a_function_of_theta} we derive the two-qubit reduced states $\rho^{(2)}$ for optimal probe states.

\subsection{Optimal state for SU(2) estimation}
\label{sec:Finding_optimal_states}

In this section, we focus on finding states that satisfy \cref{eq:condition_theta_0} for spin $J$ being an integer, i.e., $N$ being an even number of qubits. 
To find these states we rely on the fact that restricting an irreducible representation of $\SU(2)$ to a representation of one of its finite subgroups $G$ gives, in general, a reducible representation of $G$.
Identifying the irreps of $G$ contained in the restricted representation and their multiplicities allows us to classify the $G$ symmetry of states that transform under $\SU(2)$.
In particular, $2$-dimensional irreps of some finite groups provide candidates for encoding logical qubits \cite{Gross2021,omanakuttan2023multispin}.

In the present context, one way to obtain a state that satisfies the quantum metrology conditions is to identify a finite group $G$ such that states in its trivial irrep satisfy the conditions.
A trivial irrep of $G$ can be obtained by restricting the spin-$N/2$ irreducible representation of $\SU(2)$ to $G$ (so that $g \in G$ is a unitary $N+1\times N+1$ matrix) and taking the image of the projection
\begin{equation}
    \Pi=\frac{1}{\lvert G \rvert} \sum_{g\in G} g.
    \label{eq:trivial-irrep-projector}
\end{equation}
The $G$-invariant component of a given $N$ qubit symmetric state $\ket{\psi}$ is obtained from $\Pi\ket{\psi}$.

As a first example of $G$, one can use the binary tetrahedral group ($2T$) \cite{Gross2021,omanakuttan2023multispin}.
Trivial irreps do not exist in irreducible representation of $\SU(2)$ where $J$ is a half-integer nor in the integer cases $J=1,2$, and  $5$ \cite{Gross2021}. 
The $2T$ group is a double cover of the alternating group $A_{4}$, the latter
having order 12 and  defined according to the generators and relations
\begin{equation}
    A_{4}=\langle G_{1},G_{2}\vert G_{1}^{3}=1,G_{2}^{2}=1,(G_{1}G_{2})^{3}=1\rangle
\end{equation}
where for integer representations of $\SU(2)$ one can choose the orientation
\begin{align}
    G_1&=\exp \left(\frac{2\pi i}{3}\frac{J_x+J_y+J_z}{\sqrt{3}}\right) \nonumber \\
    G_{2}&= e^{-i\pi J_{z}}
    \,,
\end{align}
corresponding to the symmetries of a tetrahedron inscribed in a cube with faces perpendicular to the coordinate axes.
From the form of the generators, one can see that $A_{4}\cong $2T$/Z_{2}$ is faithfully represented for integer $J$. Note that $G_{1}$ implements the basic $Z_{3}$ symmetry of the generators, which is the starting point of our multiparameter metrology setting.

Since $2T$ has symmetries taking $J_i$ to $-J_i$, the first moments of the generators $\langle J_i\rangle$ must vanish. Further, because $2T$ contains the $Z_{3}$ symmetry of the generators,  the $\langle J_i^2\rangle$ terms in the quantum metrology conditions of \cref{eq:condition_theta_0} must be equal to one another, and therefore take the optimal values because of the Casimir operator. Finally, for any $J_i$ and $J_j$, there is a $2T$ symmetry taking $J_i$ to $-J_i$ while leaving $J_j$ unchanged, so the $\langle J_iJ_j\rangle$ terms also must vanish.
Therefore states in the trivial irrep of $2T$ automatically satisfy \cref{eq:condition_theta_0}.

For example, for $J=3$ the multiplicity of the trivial irrep of $2T$ occurs with multiplicity 1, so the optimal state, written in the $\ket{J,m_z}$ basis, is
\begin{equation}
    {\ket{J=3,m_z=2}+ \ket{J=3,m_z=-2} \over \sqrt{2}}.
    \label{eq:2T_state_j_3}
\end{equation}
Moving to general spin-$J$ representations, it is well known that in the case of single-parameter estimation, an optimal state is a Greenberger-Horne-Zeilinger (GHZ) state. For example, when the parametrized unitary is $e^{-i\theta J_z}$, an optimal state is 
\begin{equation}
    \ket{\text{GHZ}_{z}}:={\ket{J,m_z=J}+\ket{J,m_z=-J}\over \sqrt{2}}.
\end{equation}
Motivated by the optimality of the GHZ state for estimation of rotation about a single axis \cite{Braunstein_caves_metrology}, a family of states was introduced for the three-parameter SU(2) estimation problem 
\begin{equation}
    \ket{\mathrm{compass}}=\mathcal{N} \sum_{\ell = x,y,z} e^{i\delta_\ell}\ket{\text{GHZ}_\ell},
    \label{eq:compass_state}
\end{equation}
where $\mathcal{N}$ is the normalization constant and $\{\delta_\ell\}_{\ell=x,y,z}$ are real, adjustable phases.
However, as we will show, the optimality of the compass state for $N\equiv 0 \text{ mod } 8$ can be viewed as a consequence of its $2T$ invariance for these representations.
This fact raises the question of whether these optimal compass states are $2T$ invariant because such invariance is sufficient for optimality.
For spin-$N/2$ $\SU(2)$ representations we plot the overlap of compass states in \cref{eq:compass_state} with the trivial irrep of $2T$ in \cref{fig:compare_2T_compass}.
The ${2T}$-invariance of the compass state is observed for $N\equiv 0 \mod 8$, in agreement with the known optimality result.

\begin{figure}[t]
    \includegraphics[width=\columnwidth]{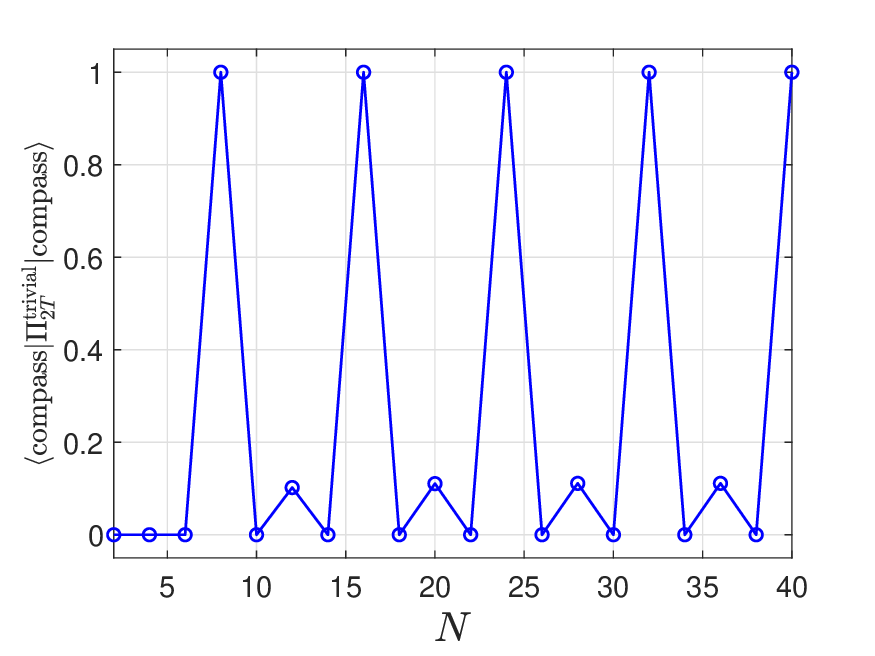}
    \caption{Component of the compass state defined by \cref{eq:compass_state} in trivial irreps $2T$ in spin-$N/2$ representations of $\SU(2)$. 
    As seen from the figure, there is a periodic behavior when a compass state becomes $2T$, and the periodicity is when $N \text{ mod } 8=0$.
    }
    \label{fig:compare_2T_compass}
\end{figure}
To prove the $N\equiv 0 \mod 8$ 
periodicity of $2T$ invariance of the compass state, one can look at the action of specific elements of the $2T$ group on the highest weight state $\ket{J,m_z={N\over 2}}$.
Consider, e.g., $G_2=e^{-i\pi J_z}$, and note that
\begin{equation}
\begin{aligned}
     &\bra{J,m_z={N\over 2}}G_2\ket{J,m_z={N\over 2}}\\
     &=\bra{0}^{\otimes N} \bigotimes_{i=1}^{N} \exp(-i\frac{\pi}{2}\sigma_z^{(i)}) \ket{0}^{\otimes N}=(-i)^N.
\end{aligned}
\end{equation}
Therefore, when $N \text{ mod } 4=0$, invariance with respect to this group generator is obtained. 
Next, one can consider the other generator $G_{1}$ which yields
\begin{multline}
    \bra{J,m_z={N\over 2}}G_1\ket{J,m_z={N\over 2}}=
    \\
    \bra{0}^{\otimes N} \bigotimes_{i=1}^{N} \exp(-i\frac{2\pi}{6}\left(\sigma_x^{(i)}+\sigma_y^{(i)}+\sigma_z^{(i)}\right)) \ket{0}^{\otimes N}
    \\
    =\left(\frac{1+i}{2} \right)^{N} 
    =\frac{1}{2^{N/2}}\exp\left(-i\frac{N\pi}{4}\right)
    \,,
\end{multline}
so the above matrix element is 1 if and only if $N \text{ mod }8 =0$.
For example, for $J=4$ the trivial irrep of $2T$ is one-dimensional, so the unique optimal state with $2T$ symmetry is indeed the compass state
\begin{equation}
\begin{aligned}
    \sqrt{\frac{5}{24}}\left(\ket{m_z=4}+\ket{m_z=-4}\right)+\sqrt{\frac{7}{12}}\ket{m_z=0}.
\end{aligned}
    \end{equation}  
By contrast, the optimal binary tetrahedral state and compass state are different for $J=3$.
We compare their spin-Wigner functions~\cite{stratonovich1957distributions,spin_wigner_dowling,spin_wigner_moyal} in \cref{fig:compare_2T_compass_husimi}.

\begin{figure*}
\centering
 \subfloat[$2T$]{\includegraphics[width =\columnwidth]{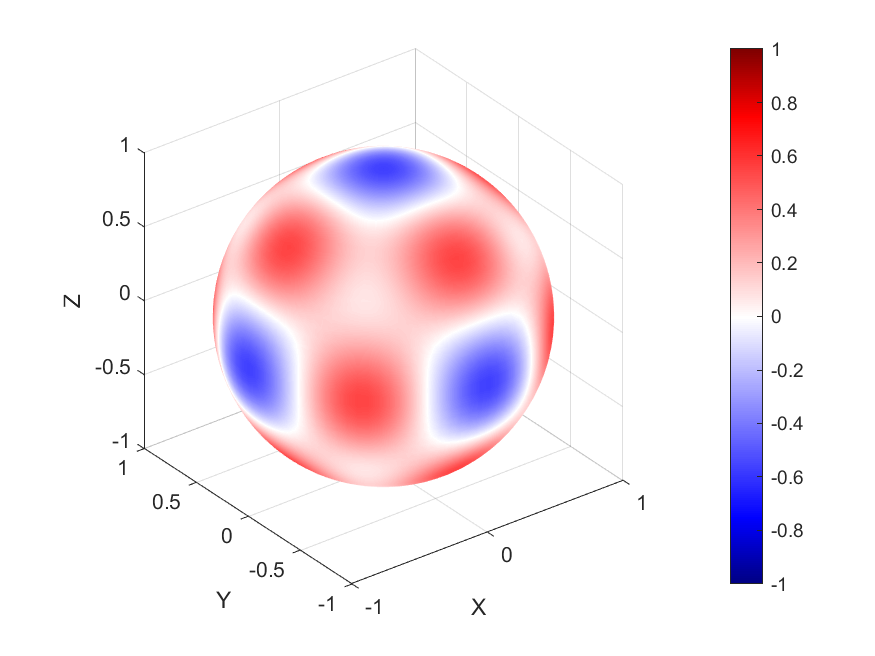} \label{fig:Fig_2T}}
    \subfloat[Compass]{\includegraphics[width =\columnwidth]{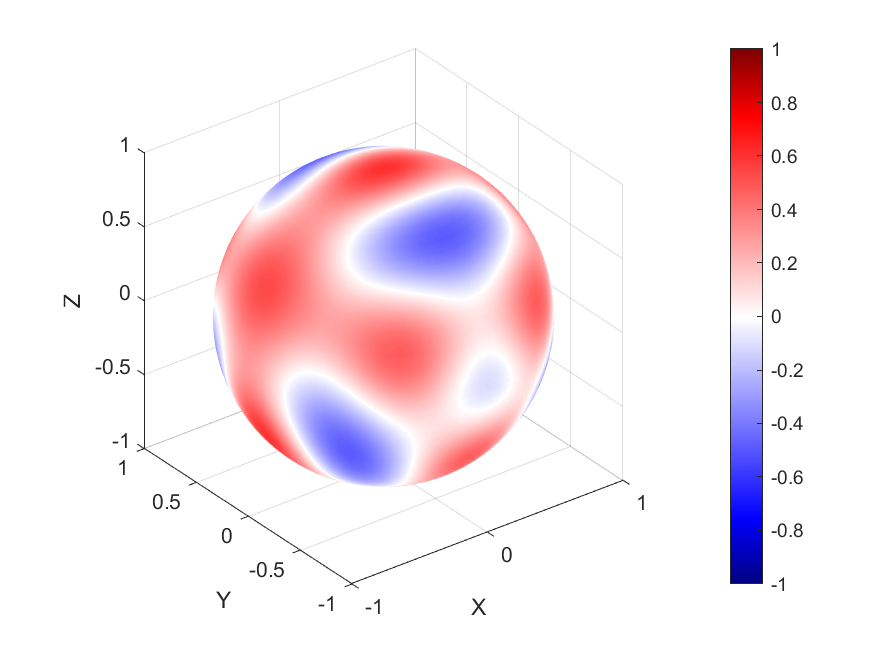} \label{fig:Fig_compass}}  
    \caption{Spin-Wigner function for the $2T$ invariant and compass state for $J=3$. (a) Wigner function of the state in the trivial irrep of $2T$ as given by \cref{eq:2T_state_j_3}.
    (b) Wigner function of the compass state for $J=3$, which is $\ket{\mathrm{compass}}=1/\sqrt{2}\left(e^{-0.1476 \pi i }\sqrt{\frac{5}{12}}\ket{J_z=3}+e^{-i\pi/2}\sqrt{\frac{5}{4}}\ket{J_z=-1}+\sqrt{\frac{1}{3}}\ket{J_z=-3}\right)$.
    }
    \label{fig:compare_2T_compass_husimi}
    \end{figure*}


In the symmetry analysis of $2T$, all quantum metrology conditions \cref{eq:condition_theta_0} are satisfied.
However, it is possible to combine invariance under a smaller finite group $G$ whose trivial irreps do not guarantee all the quantum metrology conditions with a fine-tuning method to satisfy the remaining quantum metrology conditions.
Consider the $S_3$ group, the symmetric group on 3 letters, having order $\vert G\vert = 6$. 
This is a smaller group compared to $A_4$ and does not imply equal variances of the $J_{x}$, $J_{y}$, $J_{z}$ generators.
As a finite subgroup of $\SU(2)$, the group $S_3$ has the following presentation 
\begin{align}
    G_1&=\exp \left(\frac{2\pi i}{3}J_{z}\right) \nonumber \\
    G_{2}&= e^{i\pi J_{x}}\nonumber \\
    S_{3}&=\langle G_{1},G_{2}\vert G_{1}^{3}=1, G_{2}^{2}=1, (G_{2}G_{1})^{2}=1\rangle
    \,,
\end{align}
Up to normalization, an $S_{3}$ invariant state is obtained by applying the projector onto the trivial irrep defined in \cref{eq:trivial-irrep-projector}
to a chosen state.
For example, one can consider an $\SU(2)$ coherent state $\ket{\zeta}$ defined as \cite{arecchi, lieb}
\begin{equation}
    \ket{\zeta} = {1\over \left(1+\vert \zeta\vert^{2}\right)^{N\over 2}}\left( \ket{0}+\zeta\ket{1}\right)^{\otimes N} \propto e^{\zeta J_{-}}\ket{J,m_{z}=J}
\end{equation} 
with $\zeta\in \mathbb{C} \cup \lbrace \infty \rbrace$ corresponding to a point on the 2-sphere by the stereographic projection from the south pole.
Taking $\zeta = \tan {\xi \over 2}$ with $\xi\in [0,\pi/2)$, applying the $S_{3}$ twirl results in the following superposition of $\SU(2)$ coherent states on a triangular prism:
\begin{multline}
     \ket{\psi(\xi)}\propto
     \ket{\tan {\xi \over 2}}
     +e^{\pi i N\over 3}\ket{e^{-2\pi i /3}\tan {\xi \over 2}}
     \\
     +e^{2\pi i N\over 3}\ket{e^{-4\pi i /3}\tan {\xi \over 2}}
     +e^{i \pi N\over 2}\ket{\cot {\xi \over 2}}
     \\
     +e^{\pi i N\over 6}\ket{e^{2\pi i /3}\cot {\xi \over 2}}
     +e^{2\pi i N\over 3}\ket{e^{4\pi i /3}\cot {\xi \over 2}}
     \label{eqn:tp}
\end{multline}
In the limit $N\rightarrow \infty$, one finds that fine-tuning the parameter $\xi$ to $\xi = \cos^{-1}{1\over \sqrt{3}}$, then $\langle J_i^2\rangle_{\ket{\psi(\xi)}}={1\over 3}{N\over 2}({N\over 2}+1)$ for $N\equiv 0 \mod 2$ and all $N>6$.
The tradeoff for reducing the size of the symmetry group is that a parameter must be fine-tuned to satisfy the quantum metrology conditions.
Enforcing instead the $2T$ symmetry allows one to obtain optimal states without fine-tuning due to all quantum metrology conditions being satisfied.
\begin{figure*}
\centering
  \subfloat[]{\includegraphics[width =1.15\columnwidth]{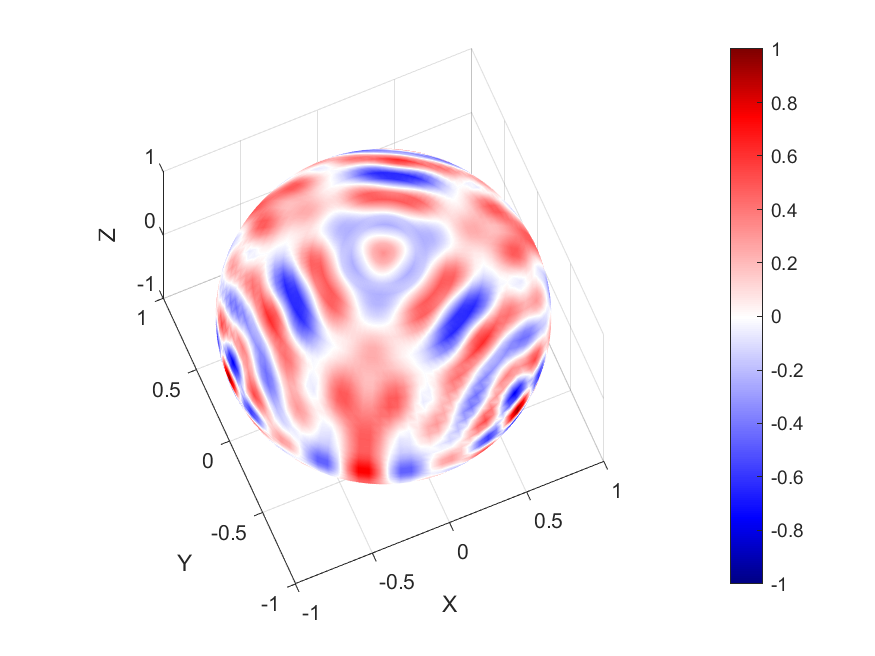} \label{fig:Fig_s3_a}}
    \subfloat[]{\includegraphics[width =0.85\columnwidth]{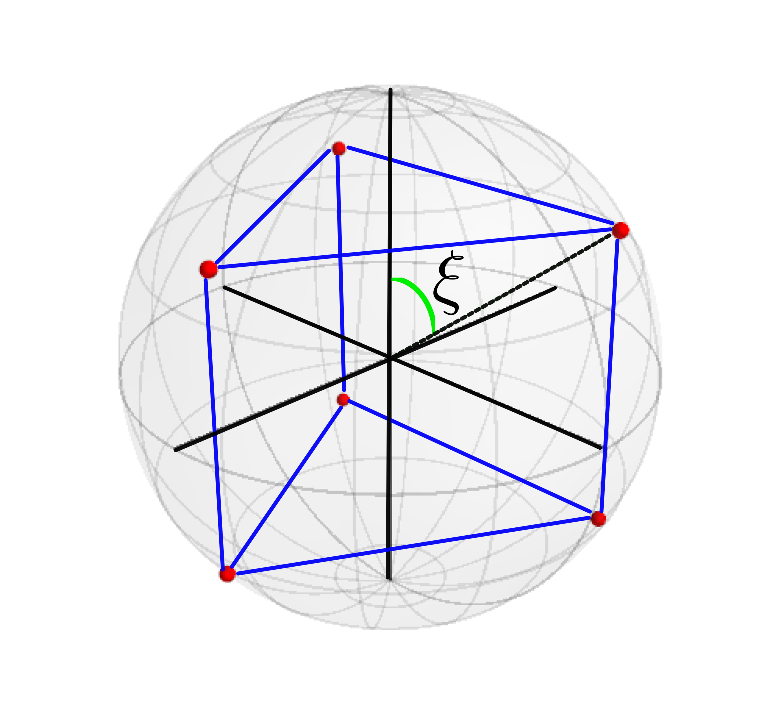} \label{fig:Fig_s3_b}}  
    \caption{(a) shows the spin-Wigner function of the state in the trivial irrep of $S_3$ for $J=10$.
    The triangular prism nature of the states is visible and becomes more pronounced for larger $J$.
    (c) In the $J_{z}$ basis, the state is a superposition of six $\SU(2)$ coherent states arranged on the vertices of a triangular prism. One can optimize the angle $\xi$ to find a state satisfying the quantum metrology conditions in \cref{eq:condition_theta_0} (see main text for more details). The specific state shown has $\xi=\cos^{-1}{1\over \sqrt{3}}$ in (\ref{eqn:tp}).
    }
    \label{fig:fig_s3_triangular}
\end{figure*}
In \cref{fig:fig_s3_triangular} (a) and (b) we plot the spin-Wigner function for $\ket{\psi(\xi)}$ for $J=10$, depicting the triangular prism structure of the states.

Another interpretation of the fine-tuning of $S_3$-invariant states that leads to $A=B=C$ (with $A=\langle J_x^2\rangle$, $B=\langle J_y^2\rangle$, and  $C=\langle J_z^2\rangle$)  is provided by the observation that the parameter $\xi$ in (\ref{eqn:tp}) is actually just a special choice of the initial state to which the $S_{3}$ twirl is applied. 
Therefore, one expects that the fine-tuning can also be implemented simply by taking a superposition of states in trivial irreps of $S_{3}$ which are associated with different values of $A$ and $C$.
To find the multiplicity of the trivial irrep, one considers the projector $\Pi_{\mathrm{trivial}}= \frac{1}{6}\sum_{g\in S_{3}}g$ and calculates its trace in the spin-$J$ representation 
 \begin{equation}
     \mathrm{tr}\left(\Pi_{\mathrm{trivial}}\right)=\frac{1}{6}\left[2J+1+3(-1)^J+2\frac{\sin \left(\frac{\pi}{3}(2J+1)\right)}{\sin \left(\frac{\pi}{3}\right)}\right].
 \end{equation}
In \cref{fig:number_of_projector_s3}, the multiplicity of the trivial irrep is shown as a function of $J$.
From the figure, it is clear above a certain value of $J$ the multiplicity is more than one, and one can therefore construct a state
\begin{equation}
    \ket{\phi}:=\alpha \ket{\psi_0}+\beta\ket{\psi_1},
    \label{eq:find_optimal_state_S3}
\end{equation}
where $\ket{\psi_i}$ are orthogonal, $S_{3}$-invariant states.
Optimizing over $\alpha,\beta$ allows one to obtain $A=B=C$.
\begin{figure}
    \centering
    \includegraphics[width=\columnwidth]{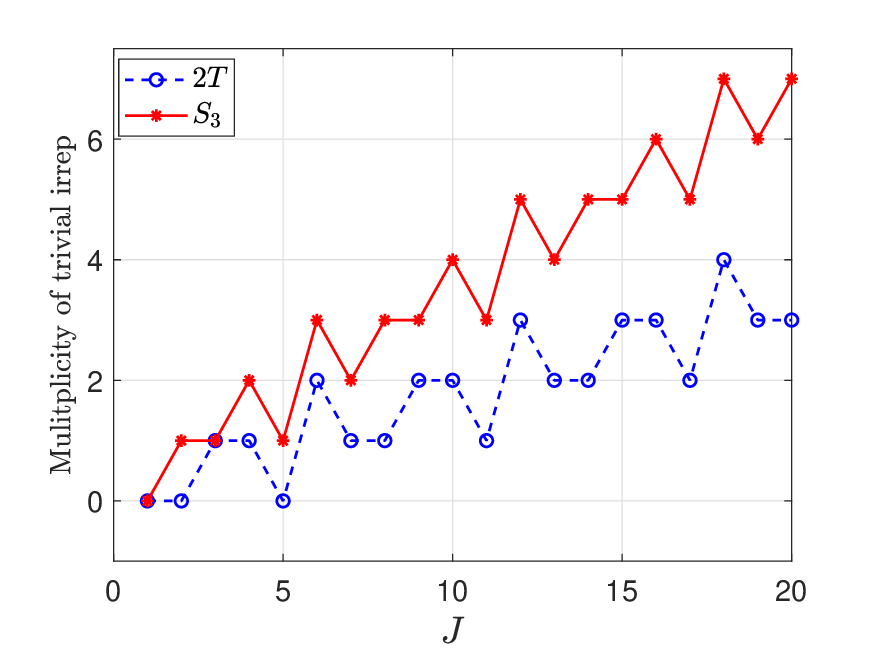}
    \caption{Multiplicity of the trivial representation of $S_3$ ($2T$) as a function of $J$ (we only consider the case of integral $J$).
From the figure, it is clear after a certain value of $J$, the multiplicity is more than one which one can use to satisfy the quantum metrology conditions given in \cref{eq:condition_theta_0}.
However, the trivial irrep of the $2T$  group naturally satisfies the quantum metrology conditions for $\SU(2)$ metrology as long as the multiplicity of the trivial irrep is larger than or equal to $1$. }
    \label{fig:number_of_projector_s3}
\end{figure}
For example, for $J=4$, the $S_{3}$ trivial irrep has  multiplicity $2$, with orthogonal states given as
\begin{equation}
\begin{aligned}
\ket{\psi_0}&=\sqrt{\frac{42}{1121}}\left(\ket{m_z=3}+\ket{m_z=-3}\right)\\
    &+e^{i\phi}\sqrt{\frac{1037}{1121}}\ket{m_z=0},\\
    \ket{\psi_1}&=\sqrt{\frac{1037}{2242}}\left(\ket{m_z=3}+\ket{m_z=-3}\right)\\
    &+\sqrt{\frac{84}{1121}}e^{ i\left(\pi +\phi\right)}\ket{m_z=0},\\
\end{aligned}    
\end{equation}
where $\phi=0.4559\pi$.
This superposition can be fine-tuned with $\alpha=0.7251$ and $\beta=0.6891$ to achieve a state 
\begin{equation}
    \sqrt{\frac{10}{27}}\left(\ket{m_z=3}+\ket{m_z=-3}\right)+e^{0.4282\pi i}\sqrt{\frac{7}{27}}\ket{m_z=0}
\end{equation}
that satisfies the quantum metrology conditions.


\section{Entanglement-assisted SU(2) estimation}
\label{sec:entanglement_as_a_resource}
To this point, our model for probe states for the $\SU(2)$ estimation problem has been the symmetric subspace of $N$ two-level systems. However, such a restriction is not inherent in the quantum metrology conditions, and replacing the $\mathfrak{su}(2)$ generators by $J_{i}\mapsto J_{i}\otimes \mathbb{I}$ allows us to consider probe states in the tensor product of two $\SU(2)$ representations. 
The motivation in Ref.~\cite{PhysRevA.65.012316} for considering maximally entangled qubit states for $\SU(2)$ estimation comes from the dense coding protocol for qubits, in which an entangled resource state allows one to encode two bits of information using local operations instead of just one as required by the single-system Holevo bound. 
Similarly, noting that
\begin{align}
    &{}\int dU \, U\ket{m_z = J}\bra{m_z = J} U^{\dagger} \nonumber \\
    &{}= \sum_{m_{z}',m_{z}''}\int dU D_{m_{z}',J}(U)\overline{D_{m_{z}'',J}(U)} \ket{m_z'}\bra{m_z''} \nonumber \\
    &{}= {1\over 2J +1} \sum_{m_{z}=-J}^{J}\ket{m_{z}}\bra{m_{z}}
\end{align}
ones finds that the ensemble $\lbrace U|m_z{=}J\rangle\langle m_z{=}J|U^{\dagger}\rbrace$ with $U$ uniform in SU(2) achieves the maximum Holevo information $\log(2J+1)$ possible in a spin-$J$ representation, whereas carrying out the analogous calculation with
the entangled state ensemble $\lbrace U\otimes \mathbb{I}\ket{\psi}\bra{\psi}U^{\dagger}\otimes \mathbb{I}\rbrace$ with $U$ uniform in SU(2) and $\ket{\psi}$ the maximally entangled state in  (\ref{eqn:maxen}) below achieves the maximum Holevo information $2\log (2J+1)$ possible in a tensor product of spin-$J$ representations.
With the same motivation as Ref.~\cite{PhysRevA.65.012316}, we now consider the entanglement-assisted $\SU(2)$ estimation setting in which a maximally entangled state of two spin-$J$ representations serves as a probe state for entanglement-assisted SU(2) estimation, i.e., for estimating $\vec{\theta}$ in $U(\vec{\theta})\otimes \mathbb{I}$ with general probe states in the tensor-product space.
Recall that the entanglement-assisted scheme for one-parameter quantum estimation with pure state probes exhibits no advantage for the same number of calls to the parametrization \cite{PhysRevLett.96.010401} (in stark contrast with noisy probe states \cite{PhysRevA.97.042112,PhysRevLett.113.250801,PhysRevA.94.012101}).
However, the optimal spin-$J$ states described in \cref{sec:SU_2_estimation} require symmetry properties that may be challenging to generate and, further, do not exist in every spin-$J$ representation. 
Here we show that the quantum-metrology conditions can be satisfied by a class of maximally entangled states. The form of the state is motivated by the optimal two-qubit state for local SU(2) estimation derived by Fujiwara \cite{PhysRevA.65.012316}, which in turn was motivated by the dense coding approach to Hamiltonian distinguishability \cite{childspresk}.
Specifically, consider the following state in the tensor product of spin-$J$ representations
\begin{equation}
\begin{aligned}
     \ket{\psi}&=\frac{1}{\sqrt{2J+1}}\sum_{i=-J}^J\ket{J,m_z=i}\ket{J,m_z=i}.
\end{aligned}
\label{eqn:maxen}
\end{equation}
 Invoking the fact that you have a maximally entangled state, the expectation value of these single-system observables is proportional to their trace given as,
\begin{equation}
    \bra{\psi} A\otimes \mathds{1}\ket{\psi}=\frac{1}{2J+1}\mathrm{tr}\left(A\right).
\end{equation}
Thus we get for the state in \cref{eqn:maxen},
\begin{equation}
    \begin{aligned}
        \langle J_i \otimes \mathds{1} \rangle &=0\\
    \langle J_i J_l  \otimes \mathds{1} \rangle &=0 \, , \, i\neq l \\
      \langle J_i^2  \otimes \mathds{1}  \rangle &=\frac{{J}({J}+1)}{3}
    \end{aligned}
\end{equation}
for $i,j\in \{x,y,z\}$.
Thus one can conclude that the conditions
\cref{eq:condition_theta_0} can be satisfied in a tensor product of any spin-$J$ representations by using a maximally entangled state. Unlike the optimal states we identified in a single copy of certain irreducible representations, the maximally entangled states have uniform amplitudes on a Dicke state basis.
Note that entanglement between isomorphic unitary representations is also a known resource for unitary tomography. 
Specifically, for covariant estimation of a $d$-dimensional unitary operation $U$ with a uniform prior and query access to $U^{\otimes N}$, it is known that an optimal probe state has support on a direct sum of entangled states of duplicated subrepresentations (of $U(d)$) for each subrepresentation appearing in $(\mathbb{C}^{d})^{\otimes N}$ \cite{PhysRevA.72.042338,PhysRevA.64.050302,PhysRevA.75.022326}.

\section{QFI as a function of $\theta$}
\label{sec:optimality_as_a_function_of_theta}

\begin{figure*}
    \subfloat[$J=3$]{\includegraphics[width =\columnwidth]{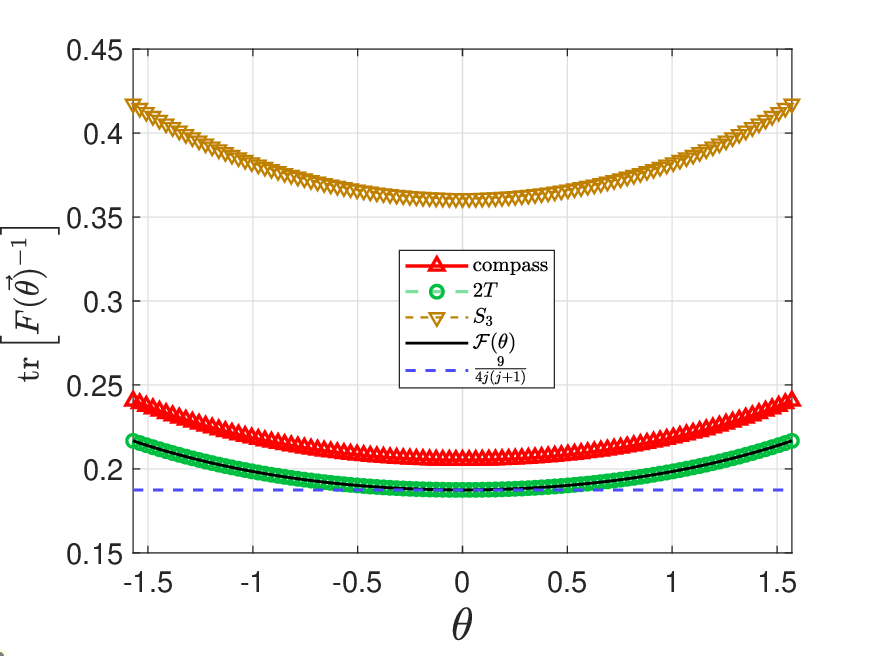} \label{fig:Fig_theta_a}}
    \subfloat[$J=4$]{\includegraphics[width =\columnwidth]{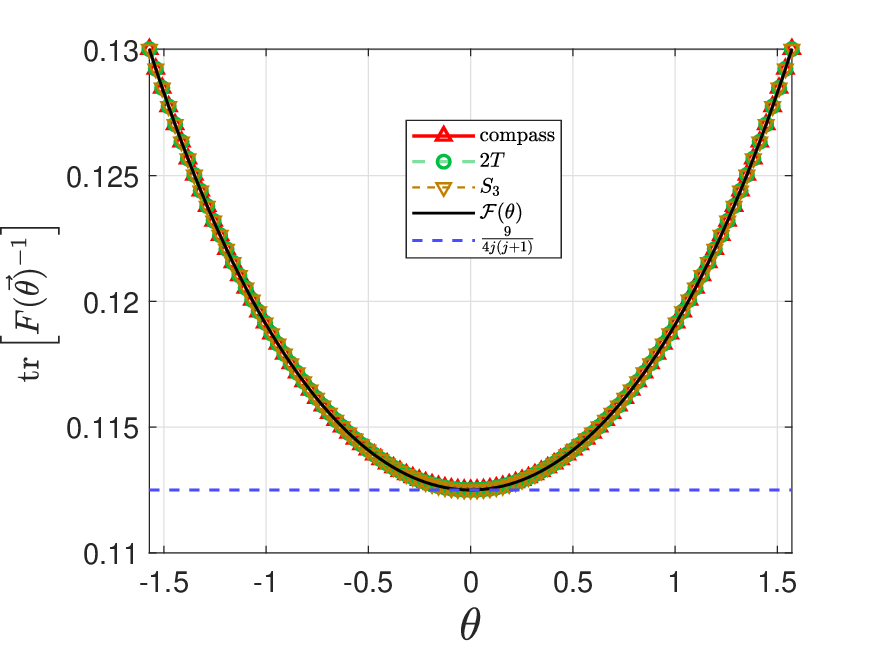} \label{fig:Fig_theta_b}}    
    \caption{The scalar quantum Cram\'{e}r-Rao bound for states considered for $\SU(2)$  estimation as we vary the parameter $\theta$ where $\vec{\theta}=\frac{\theta}{\sqrt{3}}(1,1,1)$.
    In (a), it is seen that the $2T$ state is an optimal state for the $J=3$ irreducible representation, as it satisfies the quantum metrology conditions and therefore obeys \cref{eq:opt_value}.
However, the $S_3$ and compass states do not satisfy the quantum metrology conditions and do not follow the value in \cref{eq:opt_value}.
For the $S_3$ state, even though $J=3$ contains a trivial $S_{3}$ irrep,  the multiplicity is not high enough to satisfy the quantum metrology conditions by fine-tuning (at least two distinct trivial irreps are required, as seen in \cref{eq:find_optimal_state_S3}).
In (b) we consider  $J=4$, where the compass state is a $2T$ state and one can also identify an $S_3$ state that satisfies the quantum metrology conditions.
All one and two-qubit density matrices of these states are given by \cref{eqn:rdms}, so $\mathcal{F}(\theta)$ is given by \cref{eq:opt_value} }
    \label{fig:comparison_all_angle_1}
\end{figure*}

In this section, we delve into the optimal state for $\SU(2)$ estimation for $\vec{\theta} \neq 0$, which is an important general consideration in multiparameter quantum estimation.
 From the result of~\cite{PhysRevLett.116.030801}, it is known that if an $N$-qubit state has one-qubit and two-qubit reduced states given by
\begin{equation}
    \begin{aligned}
        \rho^{(1)}&=\frac{\mathds{1}}{2},\\
        \rho^{(2)}&= \frac{1}{4} \mathds{1}\otimes \mathds{1}+\frac{1}{12}\sum_{i=1}^3 \sigma_i \otimes \sigma_i,
    \end{aligned}
    \label{eqn:rdms}
\end{equation}
where $\sigma_i$ is the Pauli matrix, then
\begin{equation}
    \text{tr}[F(\vec{\theta})^{-1}] = {3+ {6\over \text{sinc}^{2}\left( {\Vert \vec{\theta}\Vert \over 2}\right) }\over N(N+2)}=: \mathcal{F}(\theta).
    \label{eq:opt_value}
\end{equation}
In \cref{sec:Reduced_density_matrices_for_the_optimal_states}, we prove that the optimal states for the $\SU(2)$ parameter estimation that we have identified at $\vec{\theta}=0$ have the one-body and two-body reduced states as in \cref{eqn:rdms}.

In \cref{fig:comparison_all_angle_1}, we compare  $\mathcal{F}(\theta)$ for the $2T$, $S_3$, and compass states that we analyzed in Section \ref{sec:Multi_parameter_estimation} for $\SU(2)$ estimation. 
For $J=3$, since the $2T$ state is optimal
at $\vec{\theta}=0$ it produces the variance bound given by \cref{eq:opt_value}.
In contrast, the $S_3$ and compass states do not satisfy the quantum metrology conditions and, consequently, give worse lower bounds on variance than \cref{eq:opt_value}.
For the case of $J=4$ in \cref{fig:Fig_theta_b}, the compass state is invariant under $2T$, and a state with $S_3$ symmetry can be found that satisfies the quantum metrology conditions.
As anticipated, for these cases, the $\mathcal{F}(\theta)$ curves satisfy \cref{eq:opt_value}, as the one- and two-qubit density matrices satisfy (\ref{eqn:rdms}) for all three states.

\begin{figure}
    \includegraphics[width =\columnwidth]{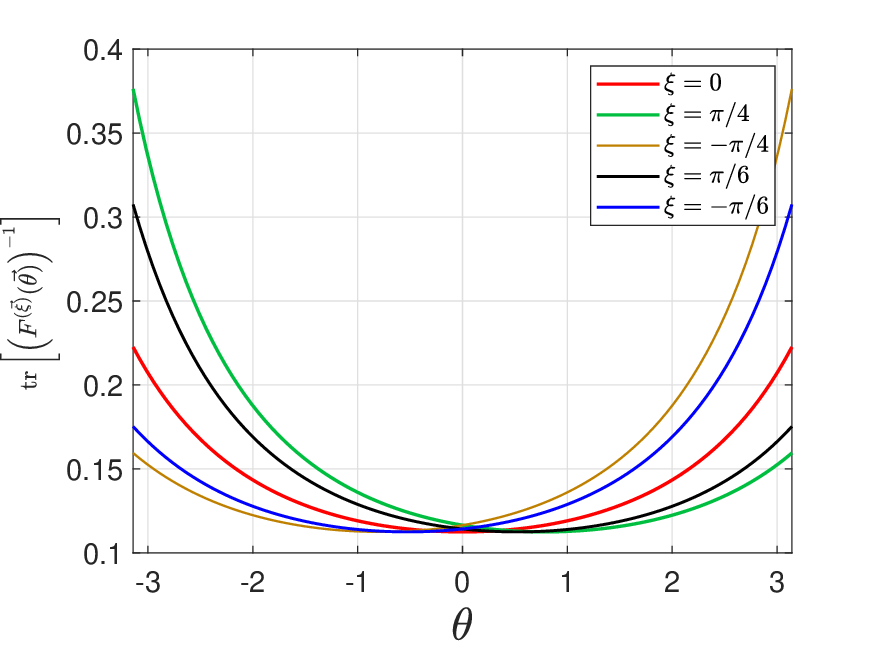}    
    \caption{Scalar quantum Cram\'{e}r-Rao bound for $\SU(2)$ at $\vec{\theta}=\frac{\theta}{\sqrt{3}}(1,1,1)$ for varying $\theta$, and for different values of $\vec{\xi}= \frac{\xi}{\sqrt{3}}(1,1,1)$. 
        We consider the initial state $\ket{\psi}$ in \cref{eqn:fghfgh} to be a $2T$ invariant state in the $J=4$ representation, which is optimal at $\vec{\theta}=0$.
    One sees that the curve can be shifted by changing the value of $\xi$, and thus the optimal state for $\vec{\theta}=0$ is not globally optimal.
One can find a different optimal state for different values of $\vec{\theta}$.
However, one needs to tune the value of $\xi$ to $\vec{\theta}$ to obtain an optimal state, and $\theta$ is the unknown parameter, so a Hamiltonian simulation algorithm with multiple calls to $U(\theta)$ may be required to prepare an optimal state.
    }
    \label{fig:comparison_all_angle_2}
\end{figure}

For single-parameter estimation, it is known that the GHZ state is globally optimal, i.e., the maximal QFI is obtained for all values of the rotation parameter under consideration.
However, for the case of multiparameter metrology, the parameters are not generally associated with commuting generators and it is not guaranteed that an optimal state at a particular value of the parameters is globally optimal.
In \cref{sec:nonzero-theta} we show that simply transforming a $\vec{\theta}=0$ optimal initial state by an $\SU(2)$ unitary has no effect on the QFI.
For the present case of optimal SU(2) estimation with a symmetric state of $N$ qubits, we now show that having access to at least two interferometer configurations allows us to translate the sensitivity of a probe state for $\SU(2)$ parameter estimation at $\vec{\theta}=0$ to be achieved at arbitrary $\vec{\theta}=\vec{\xi}$. 
Consider $\vec{\xi} \in \mathbb{R}^{3}$ and the probe state 
\begin{equation}
    \ket{\psi_{\vec{\xi}}(\vec{\theta})}:= e^{-i(\vec{\theta}-\vec{\xi})\cdot \vec{J}}\ket{\psi}
    \label{eqn:fghfgh}
\end{equation}
which is to be considered as parametrized by $\vec{\theta}$ only. The vector $\vec{\xi}$ can be interpreted as a known classical magnetic field.
The parametrized state (\ref{eqn:fghfgh}) is not describable by the usual shift model of estimation of $\vec{\theta}$ in which the ``black box'' dynamics $U(\vec{\theta})=e^{-i\vec{\theta}\cdot \vec{J}}$ is applied to an unparametrized probe state, but rather where one has a good guess of the value of the unknown magnetic field and the ability to apply a precise control field to mostly cancel the unknown field.

In the same way as the SLD matrices are derived in (\ref{eqn:aa1}), (\ref{eqn:amatrix}), one finds the SLD matrices for the probe state (\ref{eqn:fghfgh}) to be
\begin{equation}
    L_{\vec{\theta}}^{(j)}=-2iU(\vec{\theta}-\vec{\xi})A^{(j)}(\vec{\theta}-\vec{\xi})\ket{\psi}\bra{\psi}U(\vec{\theta}-\vec{\xi})^{\dagger} + h.c. \, .
\end{equation}
Denoting the QFI matrix with probe state (\ref{eqn:fghfgh}) by $F^{(\vec{\xi})}(\vec{\theta})$, it follows directly from the definition of the QFI matrix that
\begin{equation}
    F^{(\vec{\xi})}(\vec{\theta}) = F^{(0)}(\vec{\theta}-\vec{\xi}).
\end{equation}
Because $\vec{\theta}\cdot \vec{J}$ and $\vec{\xi}\cdot \vec{J}$ do not commute in general, it is not immediately clear how to generate the state (\ref{eqn:fghfgh}) in such a way that $\vec{\theta}$ retains its interpretation as an unknown SU(2) rotation.
However, if we assume that the two interferometer configurations specified by total spin generators $\vec{\theta}\cdot \vec{J}$ and $\vec{\xi}\cdot \vec{J}$ can be utilized, then $\ket{\psi}$ can be formed by alternating applications of $U(\vec{\theta})^{\tau_{1}}$ and $U(\vec{\xi})^{\tau_{2}}$ for sufficiently small $\tau_{1}$, $\tau_{2}$ according to a Hamiltonian simulation protocol such as Trotterization. 
If the allowed interferometer configurations are arbitrary, a variational quantum algorithm with probe state $U(\vec{\theta})R(\vec{\theta},\vec{\xi})\ket{\psi}$ ($R$ varying over $\SU(2)$ is the variational unitary) would also be a viable method for approximately translating the optimal sensitivity of $\ket{\psi}$ to the point $\vec{\theta}$ \cite{Volkoff_2024}.

In \cref{fig:comparison_all_angle_2}, we examine how nonzero $\vec{\xi}$ affects the QFIM. 
We consider vectors of the form $\vec{\xi}=\frac{\xi}{\sqrt{3}}(1,1,1)$ and explore the behavior of the QFI for various values of $\xi$. 
The analysis is shown for  $J=4$, with the state $\ket{\psi}$ in \cref{eqn:fghfgh} taken to be $2T$ invariant, and therefore optimal at $\vec{\theta}=0$.
\section{Measurements that saturate QFI}
\label{sec:measurement_that_saturates_QFI}

Finding a state that optimizes the SLD QFI does not necessarily guarantee an optimal measurement scheme to saturate the QCRB.
Unlike single-parameter estimation, existence of an optimal measurement scheme for multiparameter estimation that saturates the SLD QCRB is not guaranteed.
However, the quantum metrology conditions \cref{eq:condition_theta_0} guarantee that one can construct a measurement that locally saturates the SLD QFI, thereby providing a locally optimal measurement scheme.
Because the operators $J_x,J_y,J_z$ map a code state $\ket{\psi}$ to an orthogonal state, one can construct a measurement to estimate $\vec{\theta}$ as
\begin{equation}
    \{ \ket{\psi}\bra{\psi},P_{1}, P_{2}, P_{3}, Q \}
    \label{eqn:basis}
\end{equation} where
\begin{align}
P_{i}
&:=
{J_{i}\ket{\psi}\bra{\psi}J_{i}\over \Vert J_{i}\ket{\psi}\Vert^{2}}
\end{align}
and $Q$ is a projection that completes the measurement. With
\begin{align}
|\psi(\vec{\theta})\rangle
&:=
\exp(-i\vec{\theta}\cdot \vec{J})|\psi\rangle
\end{align}
one obtains the outcome probabilities
\begin{align}
    p_{0}(\vec{\theta})&:=\big\vert \bra{\psi}\ket{\psi(\vec{\theta})} \big\vert^2,\\
    &= 1-\Vert \vec{\theta}\Vert^2 \frac{J(J+1)}{3} + O(\theta^{4})\\
    p_{i}(\vec{\theta})&:=\langle \psi(\vec{\theta})\vert P_{i} \vert \psi(\vec{\theta})\rangle \\
    &= \theta_i^2{J(J+1)\over 3} +O\left( \theta^{4}\right) \, , \, i=1,2,3 \nonumber\\
    p_{4}(\vec{\theta})&:= \langle \psi(\vec{\theta})\vert Q \vert \psi(\vec{\theta})\rangle \\
    &= O\left( \theta^{4}\right) 
    \,.
\end{align}
From these expressions, one finds that the off-diagonal elements of the classical Fisher information matrix go to zero in the limit $\vec{\theta}\rightarrow 0$ and the diagonal elements are constant at $4J(J+1)/3$ in the same limit. 
Thus as $\vec{\theta} \rightarrow 0$, the SLD QCRB is saturated.

\begin{figure}
    \centering
    \includegraphics[width=\columnwidth]{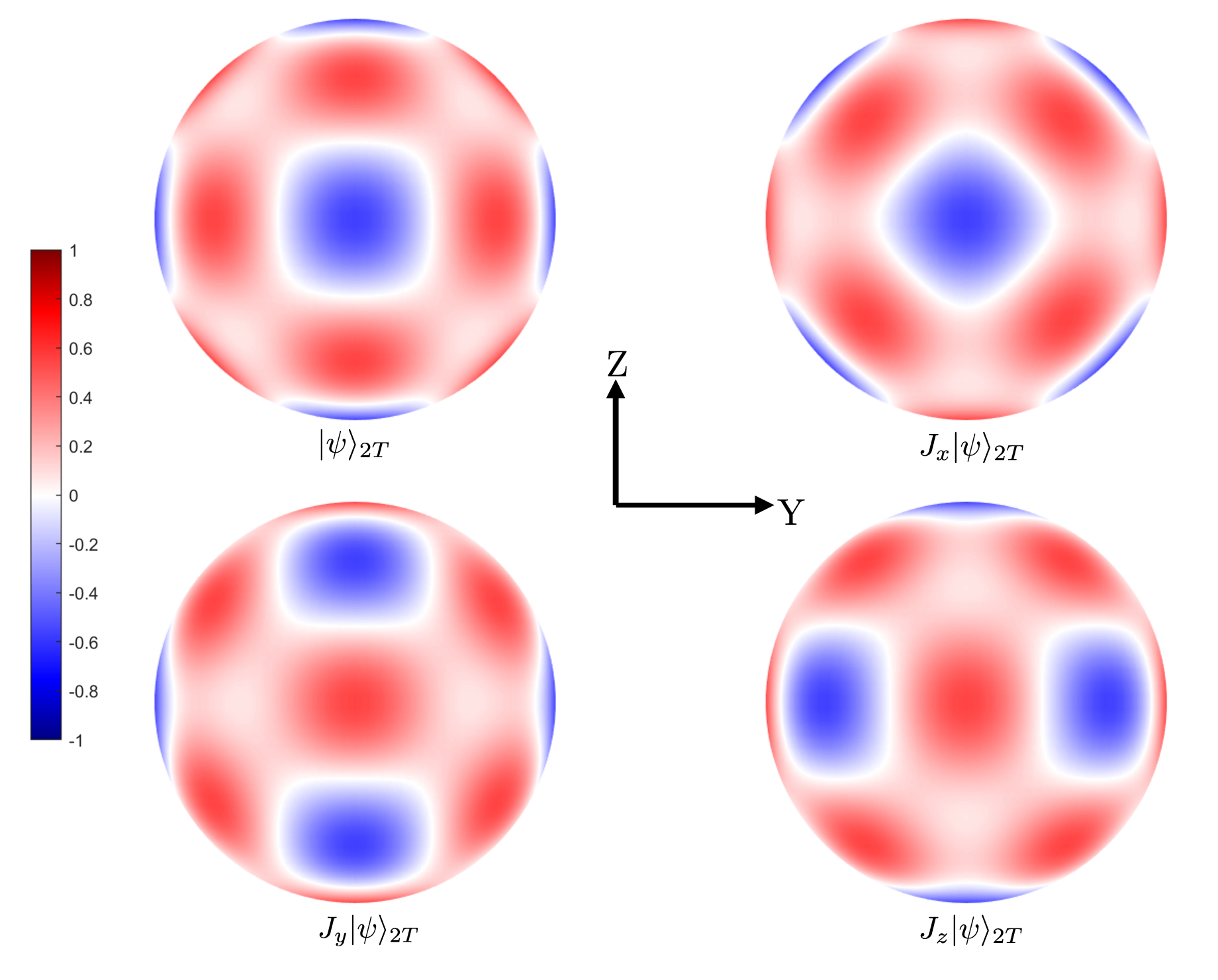}
    \caption{Spin-Wigner function \cite{stratonovich1957distributions,spin_wigner_dowling,spin_wigner_moyal} corresponding to the measurement basis we consider in \cref{eqn:basis} and these are identical to the measurement settings observed in \cite{PRXQuantum.4.020333} for the variational schemes for $\SU(2)$ parameter estimation.}
    \label{fig:Wigner_measurement}
\end{figure}

An alternate approach for obtaining a locally optimal measurement for $\SU(2)$ parameter estimation problem was proposed in \cite{PRXQuantum.4.020333} using the variational schemes for metrology. 
In \cref{fig:Wigner_measurement}, we have plotted the spin-Wigner function \cite{stratonovich1957distributions,spin_wigner_dowling,spin_wigner_moyal} corresponding to the measurement basis we consider in \cref{eqn:basis} and these are identical to the measurement settings observed in \cite{PRXQuantum.4.020333}.
However, unlike the variational approach, in this work using the quantum metrology conditions, one can analytically find an optimal state and measurement.

\begin{figure*}[!ht]
    \subfloat[$J=3$]{\includegraphics[width =\columnwidth]{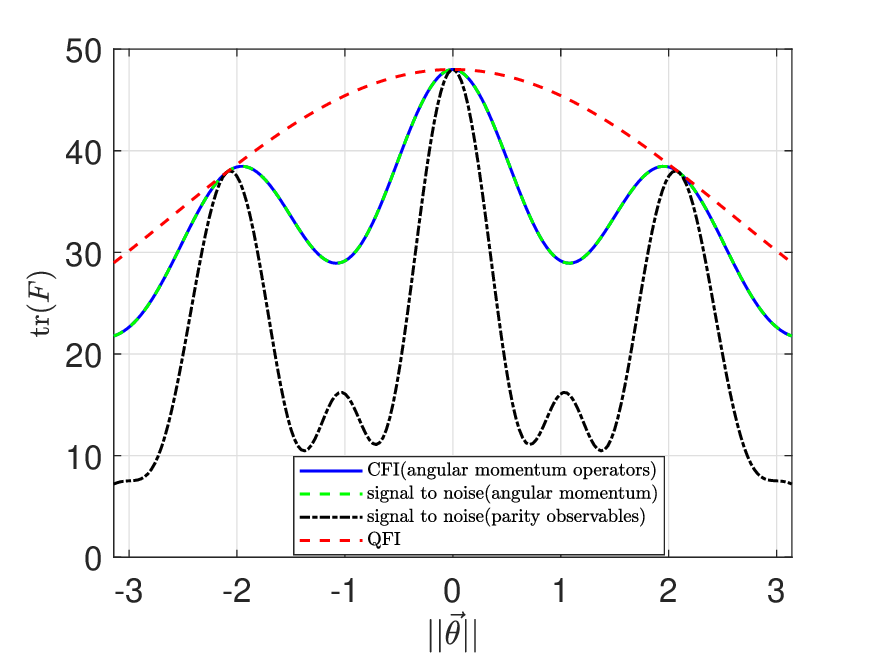} \label{fig:Fig_1_a}}
    \subfloat[$J=4$]{\includegraphics[width =\columnwidth]{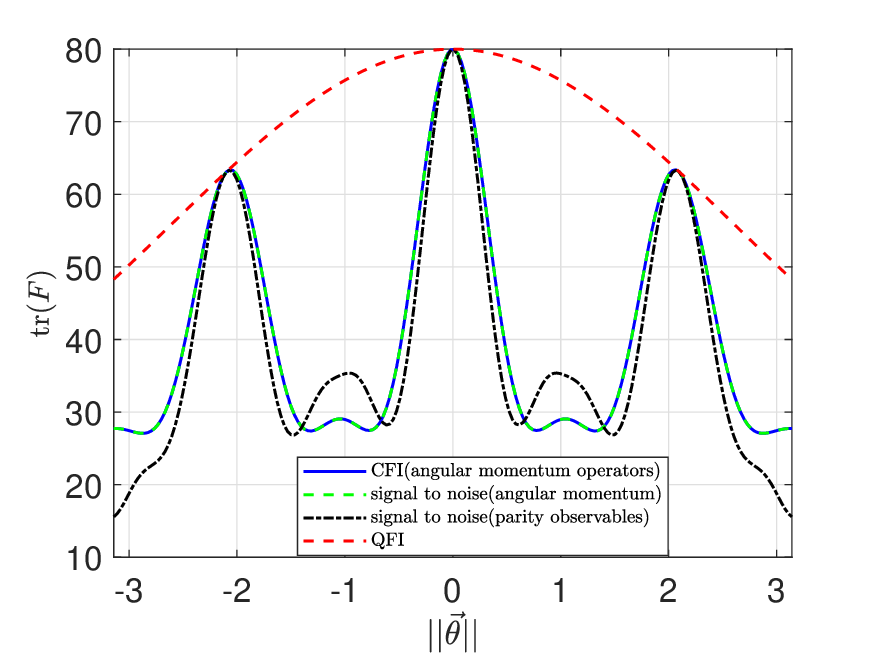} \label{fig:Fig_1_b}}    
    \caption{ Classical Fisher information for the measurement in \cref{eqn:basis} compared to the value of  $\text{tr}M$ given in \cref{eqn:methmom} for an observable list $O$ consisting of the projections obtained from (\ref{eqn:basis}) or the parity observables in \cref{eq:parity_basis}. The probe state is taken to be the $2T$ state  which is optimal at $\vec{\theta}=0$ in the $J=3$ (a) and $J=4$ (b) irreducible  representations.
 We take $\vec{\theta}=\frac{\theta}{\sqrt{3}}(1,1,1)$. For $\theta \to 0$, the reciprocal of the method of moments error of the measurement schemes saturates the QFI.
    }
    \label{fig:qcrb_saturation}
\end{figure*}

Method of moments estimation provides an alternative way to obtain a matrix that globally majorizes $F(\vec{\theta})^{-1}$ (in the sense of L\"{o}wner ordering). Specifically, the majorizing matrix is defined as a generalized signal-to-noise ratio that depends only on the first moments (i.e., expectation values) and second moments (i.e., the covariance matrix) of a set of operators. To understand this, we briefly review generalized signal-to-noise ratios and their relation to the QFIM. In single-parameter quantum metrology with a probe state $\ket{\psi(\theta)}$, the signal-to-noise ratio of an observable $O=O^{\dagger}$ is defined by
\begin{equation}
{\left(
\frac{d}{d\theta}
\langle O\rangle_{\ket{\psi(\theta)}}
\right)^{2}
\over
\text{Var}_{\ket{\psi(\theta)}}O}
\label{eqn:mommom}
\end{equation}
and provides a practical quantifier of the sensitivity of using the $O$ observable to estimate $\theta$ using the probe state $\ket{\psi}$. It is practical because it does not involve first estimating the probabilities resulting from a measurement of $O$, followed by calculation of the corresponding classical Fisher information. The signal-to-noise ratio (\ref{eqn:mommom}) is an upper bound on $F(\theta)^{-1}$ \cite{Braunstein_caves_metrology,6783980}, although this upper bound is not as tight as that obtained from the classical Fisher information corresponding to a measurement of $O$.

Analogously, in the multiparameter setting we consider estimators of the parameter $\vec{\theta}$ obtained from method-of-moments estimation of a list of observables $O=(O_{1},\ldots,O_{K})$ with $O_{i}=O_{i}^{\dagger}$. When the observables commute,   it is straightforward to show that the following inequality holds \cite{Gessner2020}:
\begin{equation}
 M(\vec{\theta})^{-1}\succeq F(\vec{\theta})^{-1}
\label{eqn:uuu}
\end{equation}
where the matrix elements of $M(\vec{\theta})$ are
\begin{equation}
  M(\vec{\theta})_{i,j}=\partial_{\theta_{i}}\langle O\rangle_{\rho_{\vec{\theta}}}  \left( \text{Cov}_{\rho_{\vec{\theta}}}(O)\right)^{-1}\partial_{\theta_{j}}\langle O\rangle_{\rho_{\vec{\theta}}} 
  \label{eqn:methmom}
\end{equation}
and the covariance is the $K\times K$ matrix
\begin{equation}
    \text{Cov}_{\rho_{\vec{\theta}}}O:=\tr [ \rho_{\vec{\theta}}  \left( (O-\langle O\rangle_{\rho_{\vec{\theta}}})^{T}\circ (O-\langle O\rangle_{\rho_{\vec{\theta}}}) \right) ]
\end{equation}
where $A\circ B:= {1\over 2}[A,B]_{+}$ is the Jordan product.

However, (\ref{eqn:uuu}) also holds when the list $O$ consists of non-commuting observables and when $M$ is nonsingular. To prove (\ref{eqn:uuu}) in this fully general setting, we use the inequality $X^{T}Y(Y^{T}Y)^{-1}Y^{T}X\succeq X^{T}X$ for $X$ a $D\times d$ matrix and $Y$ a full rank $D\times k$ mstrix  \cite{PECARIC1996455}. We define the spectral decompositions $O_{s}=\sum_{\ell_{s}}\lambda_{\ell_{s}}^{(s)}E^{(s)}_{\ell_{s}}$, $s\in [K]$, and define the matrices $X$ and $Y$ by
\begin{align}
    X_{\ell,i}&=\sqrt{p_{\vec{\theta}}(\ell)}\partial_{\theta_{i}}\log p_{\vec{\theta}}(\ell) \nonumber \\
    Y_{\ell,s} &= \sqrt{p_{\vec{\theta}}(\ell)}\left( \lambda_{\ell_{s}}^{(s)}-\langle O_{s}\rangle_{\rho_{\vec{\theta}}}\right).
\end{align}
Note that the probability density $p_{\vec{\theta}}$ is defined on the multi-index $\ell=(\ell_{1},\ldots,\ell_{K})$. The size of the product of the spectra of $O_{1}$, $\ldots$, $O_{K}$ determines $D$.
We define the probability density $p_{\vec{\theta}}(\ell)$ by 
\begin{equation}
p_{\vec{\theta}}(\ell):= {1\over K!}\text{tr}[\sum_{\sigma \in \mathfrak{S}_{K}}E_{\ell_{\sigma(1)}}^{(\sigma(1))}\cdots E_{\ell_{\sigma(K)}}^{(\sigma(K))}\rho_{\theta}]\end{equation} with $\mathfrak{S}_{K}$ the symmetric group on $K$ letters. One can verify that $p_{\vec{\theta}}(\ell)$ is a symmetric function and that, for any subset $L\subset [K]$, the marginal density on indices $\lbrace \ell_{j}\rbrace_{j\in [K]\setminus L}$ is given by 
\begin{equation}
    {1\over (K-\vert L\vert)!}\text{tr}\left[\sum_{\sigma \in \mathfrak{S}_{[K]\setminus L}}\prod_{r\in [K]\setminus L}E_{\ell_{\sigma(r)}}^{(\sigma(r))}\rho_{\vec{\theta}}\right]
\end{equation}
which upon taking $L=[K]$, implies that $p_{\vec{\theta}}(\ell)$ is normalized. 
It is clear that $X^{T}X$ is the classical Fisher information matrix for $p_{\vec{\theta}}(\ell)$ and that $Y^{T}Y$ is $\text{Cov}_{\rho_{\vec{\theta}}}O$. Then one verifies that
\begin{align}
    (Y^{T}X)_{s,i}&=\sum_{\ell}\partial_{\theta_{i}}p_{\vec{\theta}}(\ell)\left( \lambda_{\ell_{s}}^{(s)}-\langle O_{s}\rangle_{\rho_{\vec{\theta}}}\right) \nonumber \\
    &= \sum_{\ell_{s}}\partial_{\theta_{i}}p_{\vec{\theta}}^{(s)}(\ell_{s}) \lambda_{\ell_{s}}^{(s)} \nonumber\\
    &= \partial_{\theta_{i}} \langle O_{s}\rangle_{\rho_{\vec{\theta}}}
\end{align}
where $p_{\vec{\theta}}^{(s)}(\ell_{s}):= \text{tr}E_{\ell_{s}}^{(s)}\rho_{\vec{\theta}}$ is a reduced density of $p_{\vec{\theta}}(\ell)$.

In \cref{fig:qcrb_saturation}, we show the classical Fisher information for the measurement in \cref{eqn:basis}, the value of $\text{tr}M(\vec{\theta})$ given in \cref{eqn:methmom} for an observable list $O$ consisting of the projections (\ref{eqn:basis}), and also the classical Fisher information for the list of parity observables
\begin{equation}
    O=\left( e^{i\pi \left( {N\over 2}\mathbb{I}-J_{x}\right)},e^{i\pi \left( {N\over 2}\mathbb{I}-J_{y}\right)},e^{i\pi \left( {N\over 2}\mathbb{I}-J_{z}\right)} \right).
\label{eq:parity_basis}
\end{equation}
 The probe state is a $2T$ invariant state  and we take $\vec{\theta}=\frac{\theta}{\sqrt{3}}(1,1,1)$.
The numerical results indicate that as $\theta \to 0$, the classical measurement schemes allow for saturation of the QFI.


\section{Summary and Discussions}
\label{sec:discussions_and_future_work}


In this work, we combined ideas of multiparameter quantum metrology and quantum error correction to introduce the notion of quantum metrology conditions, which mirror the Knill-Laflamme conditions for quantum error correction.
To make the ideas concrete, we studied the problem of $\SU(2)$ parameter estimation, where one tries to encode parameters into all three directions of rotations.
These conditions allow us to obtain optimal probe states for $\SU(2)$ estimation and other multiparameter estimation problems in which an abelian symmetry relates the generators.

In \cite{PhysRevLett.116.030801}, the compass state was considered as an optimal state for $\SU(2)$ estimation, however, it was found that the compass state was optimal only for the total number of spins ($N$) is such that  $N\equiv 0 \text{ mod }8$.
In this article, we used the quantum metrology conditions to find the optimal state for any value of $N$.
We found that the quantum metrology conditions can be satisfied by states in the trivial irrep of the binary octahedral group ($2T$).
Further, we showed that the compass state lives in the trivial irrep of the $2T$ group for $N\equiv 0 \text{ mod }8$.
We also developed optimal states for $\SU(2)$ estimation using the trivial irrep of the $S_3$ group.
In the context of qubits, an optimal state for $\SU(2)$ parameter estimation, as demonstrated in previous work \cite{PhysRevA.65.012316}, can be identified through entanglement. 
Expanding upon this finding, we establish the general applicability of utilizing entanglement for $\SU(2)$ parameter estimation.

Further, we considered the optimal states as a function of the parameter $\vec{\theta}$ for $\SU(2)$ parameter estimation. 
Unlike the case of the single parameter estimation, the optimal state at $\vec{\theta}=0$, is not globally optimal.
However to find the optimal state for $\vec{\theta}\neq 0$, we need knowledge of the $\vec{\theta}$.
Additionally, the quantum metrology conditions derived in this paper enable the identification of optimal measurement schemes for $\SU(2)$ parameter estimation.
This measurement scheme is particularly effective for small angles of rotation.

One can use the well-known ideas of quantum optimal control \cite{PhysRevA.51.960,PhysRevLett.99.163002,PhysRevA.104.L060401,PRXQuantum.4.040333} or linear combination of unitaries \cite{Berry_LCU,spin_GKP}, to implement the $\SU(2)$ estimation schemes efficiently in the current state of the art experimental settings for quantum metrology \cite{Schine_Kaufman_Bell_state_Martin,Kaubruegger_Zoller_metrology,Bornet_Browaeys_2023_Scalable_sqeezing,Hines_Schleier-Smith_2023_Squeezing_Rydberg_dressing, Eckner_Kaufman_2023_Squeezing_clock}.
We expect that extension of our quantum metrology conditions to more general spin encodings (e.g., symmetric states of $(\mathbb{C}^{d})^{\otimes N}$) and including the effects of decoherence can provide a route toward a unified, symmetry-based framework for identification of optimal noisy probe states for multiparameter quantum estimation.

\begin{acknowledgements}
SO would like to acknowledge fruitful discussions with Ivan Deutsch, Andrew Forbes, Vikas Buchemmavari, and Tyler Thurtel. 
TJV acknowledges useful discussions with Mohan Sarovar.
This work was supported by the Laboratory Directed Research and Development program of Los Alamos National Laboratory under project numbers 20200015ER,  
and the NSF Quantum Leap Challenge Institutes program, Award No. 2016244.

This paper was prepared for informational purposes with contributions from the Global Technology Applied Research center of JPMorgan Chase \& Co.
This paper is not a product of the Research Department of JPMorgan Chase \& Co. or its affiliates. Neither JPMorgan Chase \& Co. nor any of its affiliates makes any explicit or implied representation or warranty, and none of them accept any liability in connection with this position paper, including, without limitation, with respect to the completeness, accuracy, or reliability of the information contained herein and the potential legal, compliance, tax, or accounting effects thereof. This document is not intended as investment research or investment advice, or as a recommendation, offer, or solicitation for the purchase or sale of any security, financial instrument, or financial product or service, or to be used in any way for evaluating the merits of participating in any transaction. 

\end{acknowledgements}
\clearpage

\appendix

\section{Proof of (\ref{eqn:sld})\label{sec:app1}}
The general expression for the symmetric logarithmic derivative of the pure state model $\rho_{\vec{\theta}}=\ket{\psi(\vec{\theta})}\bra{\psi(\vec{\theta})}=U(\vec{\theta})\ket{\psi}\bra{\psi}U(\vec{\theta})^{\dagger}$ is derived in \cite{PhysRevLett.116.030801}, so we just state the result
\begin{equation}
L^{(j)}_{\vec{\theta}}=-2iU(\vec{\theta})A^{(j)}(\vec{\theta})\ket{\psi}\bra{\psi}U(\vec{\theta})^{\dagger} + h.c.
\label{eqn:aa1}
\end{equation}
where
\begin{equation}
A^{(j)}(\vec{\theta}):= \int_{0}^{1}d\alpha e^{i\alpha \vec{\theta}\cdot \vec{J}}J_{j}e^{-i\alpha \vec{\theta}\cdot \vec{J}}.
\label{eqn:amatrix}
\end{equation}
One notes that $A^{(j)}(0)=J_{j}$.
The latter equation is simplified using the adjoint action of SU(2)
\begin{align}
e^{ix\vec{n}\cdot \vec{J}}\vec{m}\cdot \vec{J}e^{-ix\vec{n}\cdot \vec{J}} &= \cos x \,  \vec{m}\cdot \vec{J} + \sin x \, (\vec{m}\times \vec{n})\cdot\vec{J}\nonumber \\
&{} + 2\sin^{2}\left({x\over 2}\right)(\vec{n}\cdot \vec{m})\vec{n}\cdot\vec{J}
\end{align}
to get
\begin{align}
A^{(j)}(\vec{\theta}) &= \int_{0}^{1}d\alpha \, \left[ \cos (\alpha \Vert \vec{\theta}\Vert)J_{j} + 2\sin^{2}\left( {\alpha \Vert \vec{\theta}\Vert \over 2} \right){\theta_{j}\over \Vert \vec{\theta}\Vert^{2}}\vec{\theta}\cdot\vec{J} \right. \nonumber \\
&{} \left. (\vec{e}_{j}\times \vec{\theta}){\sin \alpha \Vert \vec{\theta}\Vert \over \Vert \vec{\theta}\Vert}\cdot \vec{J} \right]\nonumber \\
  &= {\sin \Vert \vec{\theta}\Vert\over \Vert \vec{\theta}\Vert}  J_{j} +\left(1-{\sin\Vert\vec{\theta}\Vert\over \Vert\vec{\theta}\Vert} \right) {\theta_{j}\over \Vert \vec{\theta}\Vert^{2}}\vec{\theta}\cdot \vec{J}\nonumber \\
&{} + 2{\sin^{2}{ \Vert \vec{\theta}\Vert\over 2}\over \Vert \vec{\theta}\Vert^{2}}(\vec{e}_{j}\times \vec{\theta})\cdot \vec{J}.
\label{eqn:aaaa}
\end{align}

To get (\ref{eqn:sld}), note that one can rewrite (\ref{eqn:aa1}) as
\begin{align}
L^{(j)}_{\vec{\theta}}&=2iA^{(j)}(-\vec{\theta})U(\vec{\theta})\ket{\psi}\bra{\psi}U(\vec{\theta})^{\dagger} + h.c. \nonumber \\
&= 2iA^{(j)}(-\vec{\theta})\rho_{\vec{\theta}}+ h.c. \, 
\end{align}
due to the fact that $U(\vec{\theta})A^{(j)}(\vec{\theta})= - A^{(j)}(-\vec{\theta}) U(\vec{\theta})$.

\section{Reduced density matrices for the optimal states}
\label{sec:Reduced_density_matrices_for_the_optimal_states}
In this section, we consider the one and two-qubit reduced density matrices for the optimal states for the $\SU(2)$ parameter estimation at $\vec{\theta}=0$.
We prove that at $\vec{\theta}=0$, the structure of the reduced states given by (\ref{eqn:rdms}) is necessary to achieve the minimal value of trace of the QFI matrix.

\begin{lemma}Let $\ket{\psi}$ be in the spin-$N/2$ representation of $\SU(2)$ given by the symmetric subspace of $N$ qubits. Then $\text{\normalfont{tr}}F(0)^{-1}$ takes the minimal value ${9\over N(N+2)}$ if and only if $\rho^{(1)}:=\text{\normalfont{tr}}_{[N]\setminus \lbrace 1\rbrace}\ket{\psi}\bra{\psi}$ and $\rho^{(2)}:=\text{\normalfont{tr}}_{[N]\setminus \lbrace 1,2\rbrace }\ket{\psi}\bra{\psi}$ are given by (\ref{eqn:rdms}).
    \end{lemma}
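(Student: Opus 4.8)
The plan is to establish both implications by analyzing exactly which steps of the inequality chain (\ref{eqn:aaa}) are saturated when the minimum is attained, and then translating those saturation conditions into statements about the one- and two-qubit reduced states through the symmetric-state formula for $F(0)$ displayed just above (\ref{eqn:utut}). The key preliminary remark is that for a pure symmetric probe the diagonal QFIM elements obey $F(0)_{ii}=4\,\text{Var}_{\ket\psi}J_i$ identically, so the middle relation in (\ref{eqn:aaa}) is an equality and the chain contains only two genuine inequalities: the arithmetic--harmonic-mean bound $\text{tr}\,F(0)^{-1}\ge 9/\text{tr}\,F(0)$, which is tight iff the three eigenvalues of $F(0)$ coincide, and the Casimir bound $\sum_i\text{Var}_{\ket\psi}J_i\le \tfrac14 N(N+2)$, which is tight iff $\langle\vec J\rangle=0$.

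For the easy direction I would substitute the stated reduced states into the symmetric-state formula. Since $\rho^{(1)}=\mathds{1}/2$ gives $\text{tr}[\rho^{(1)}\sigma_i]=0$ and $\rho^{(2)}$ as in (\ref{eqn:rdms}) gives $\text{tr}[\rho^{(2)}(\sigma_i\otimes\sigma_j)]=\tfrac13\delta_{ij}$, one obtains $F(0)=\tfrac13 N(N+2)\,\mathds{1}_3$ and hence $\text{tr}\,F(0)^{-1}=9/N(N+2)$, the minimal value.

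The content is in the converse. If the minimum is attained then $\text{tr}\,F(0)^{-1}$ is finite, so $F(0)$ is invertible, and squeezing the chain forces both genuine inequalities to be equalities at once. Tightness of the Casimir bound forces $\langle J_i\rangle=\tfrac{N}{2}\text{tr}[\rho^{(1)}\sigma_i]=0$ for all $i$; writing $\rho^{(1)}=\tfrac12(\mathds{1}+\vec r\cdot\vec\sigma)$ with $r_i=\text{tr}[\rho^{(1)}\sigma_i]$ this immediately gives $\rho^{(1)}=\mathds{1}/2$. Using $\langle\vec J\rangle=0$ one computes $\text{tr}\,F(0)=4\sum_i\text{Var}_{\ket\psi}J_i=N(N+2)$, and tightness of the AM--HM bound then forces the eigenvalues to be equal, so $F(0)=\tfrac13 N(N+2)\,\mathds{1}_3$. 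Reading off the matrix elements of this scalar matrix from the symmetric-state formula (with the $\langle J_i\rangle$ terms already gone) yields $\text{tr}[\rho^{(2)}(\sigma_i\otimes\sigma_j)]=\tfrac13\delta_{ij}$.

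It remains to reconstruct $\rho^{(2)}$ from its Pauli expansion $\rho^{(2)}=\tfrac14\big(\mathds{1}\otimes\mathds{1}+\sum_i a_i\,\sigma_i\otimes\mathds{1}+\sum_j b_j\,\mathds{1}\otimes\sigma_j+\sum_{i,j}c_{ij}\,\sigma_i\otimes\sigma_j\big)$. By permutation symmetry both single-qubit marginals of $\rho^{(2)}$ equal $\rho^{(1)}=\mathds{1}/2$, so $a_i=b_j=0$, while $c_{ij}=\text{tr}[\rho^{(2)}(\sigma_i\otimes\sigma_j)]=\tfrac13\delta_{ij}$ from the previous step; substituting reproduces (\ref{eqn:rdms}) exactly. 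I expect the main obstacle to be purely logical rather than computational: one must argue carefully that attaining the global minimum forces both independent inequalities to saturate simultaneously, and then correctly match each equality condition ($\langle\vec J\rangle=0$ and equal eigenvalues of $F(0)$) to the reduced-state constraints, the conversion between $\langle\{J_i,J_j\}\rangle$ and $\text{tr}[\rho^{(2)}(\sigma_i\otimes\sigma_j)]$ being routine.
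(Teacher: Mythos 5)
Your proof is correct, and it takes a genuinely different route from the paper's. The paper works forward from a parametrization of the reduced state: permutation symmetry plus discrete rotation invariances (deduced from $\langle J_i\rangle=0$) force $\rho^{(2)}=a\,\mathds{1}\otimes\mathds{1}/4+b\,\text{SWAP}/2$, after which $F(0)=\left(N+Nb(N-1)\right)\mathds{1}_3$ is computed as a function of the Werner parameter and optimized over the physically allowed range $-1\le b\le 1/3$, the maximum $b=1/3$ giving the stated $\rho^{(2)}$; the ``if'' direction is delegated to Ref.~\cite{PhysRevLett.116.030801}. You instead work backward from the saturation conditions of the chain (\ref{eqn:aaa}): attaining the minimum squeezes both genuine inequalities at once, tightness of the Casimir bound forces $\langle\vec J\rangle=0$ and hence $\rho^{(1)}=\mathds{1}/2$, tightness of the AM--HM step forces $F(0)={1\over 3}N(N+2)\mathds{1}_3$, and you then invert the symmetric-state formula for $F(0)_{ij}$ to get $\mathrm{tr}[\rho^{(2)}(\sigma_i\otimes\sigma_j)]={1\over 3}\delta_{ij}$ and reconstruct $\rho^{(2)}$ from its Pauli expansion, using permutation symmetry only to kill the single-qubit terms. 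Your version buys two things: it is self-contained in both directions (the easy direction becomes a substitution rather than a citation), and it treats the zero-mean condition more rigorously --- the paper obtains $\langle J_i\rangle=0$ via a ``replace $J_i$ by $J_i-\langle J_i\rangle$'' step (which quietly changes the generators of the problem) and a claimed $\sigma_1^{\otimes 2}$-invariance of $\rho^{(2)}$ that is not fully justified as stated, whereas in your argument $\langle\vec J\rangle=0$ is forced directly by optimality. What the paper's parametrization buys in exchange is quantitative information about non-optimal states, namely $F(0)$ as an explicit function of $b$, showing how the bound degrades away from the optimum; a pure squeezing argument cannot provide that. The only point worth flagging in your write-up is the implicit assumption $N\ge 2$ when dividing by $N(N-1)$ to extract the two-qubit correlators, which is harmless since $\rho^{(2)}$ only exists for $N\ge 2$.
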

    \begin{proof}
       From (\ref{eqn:sldzero}), it follows that $F(0)_{i,j}=4\langle J_{i}-\langle J_{i}\rangle_{\ket{\psi}} \circ  J_{j}-\langle J_{j}\rangle_{\ket{\psi}} \rangle_{\ket{\psi}}$. An optimal state $\ket{\psi}$ can be taken to satisfy $\langle J_{i}\rangle_{\ket{\psi}}=0$ by replacing the rotation generators according to $J_{i}\mapsto J_{i}-\langle J_{i}\rangle_{\ket{\psi}}$. Combined with the fact that $\ket{\psi}$ is symmetric, so that the first part of (\ref{eqn:rdms}) is satisfied. Because $\ket{\psi}$ is symmetric, the two qubit reduced state satisfies $[s,\rho^{(2)}]=0$ where $s$ is in the symmetric group  $\lbrace \mathbb{I}\otimes \mathbb{I} ,\text{SWAP}\rbrace$ on two qubits. Therefore, $\rho^{(2)}$ has the form
       \begin{equation}
           \rho^{(2)}=\begin{pmatrix}
               \lambda_{1}&0&0&0\\
               0&{\lambda_{3}+\lambda_{4}\over 2}&{\lambda_{3}-\lambda_{4}\over 2}&0\\
               0&{\lambda_{3}-\lambda_{4}\over 2}&{\lambda_{3}+\lambda_{4}\over 2}&0\\
               0&0&0&\lambda_{2}
           \end{pmatrix}
       \end{equation}
       in the two qubit computational basis, where $\sum_{l=1}^{4}\lambda_{l}=1$ and $0\le \lambda_{l}\le 1$. Introducing the Pauli matrices $\sigma_{1}$, $\sigma_{2}$, $\sigma_{3}$, note that because $\langle J_{z}\rangle_{\ket{\psi}}=0$, the state $\ket{\psi}$ must satisfy $\sigma_{1}^{\otimes 2}\rho^{(2)}\sigma_{1}^{\otimes 2}=\rho^{(2)}$, which implies that $\lambda_{1}=\lambda_{2}$. It follows that $\rho^{(2)}$ has the general form $\rho^{(2)}=\alpha {\mathbb{I}_{2}\otimes \mathbb{I}_{2}\over 4}+\beta (\sigma_{1}\otimes \sigma_{1} + \sigma_{2}\otimes \sigma_{2}) + \gamma (\sigma_{3}\otimes \sigma_{3})$. The analogous symmetry arguments apply for the other spin directions, which implies that $\beta =\gamma$.  Reparametrizing $\rho^{(2)}$ results in the general form
       \begin{equation}
           \rho^{(2)}= a {\mathbb{I}_{2}\otimes \mathbb{I}_{2}\over 4} + b{\text{SWAP}\over 2}
       \end{equation}
       where $a+b=1$ and $-1\le b\le {1\over 3}$ are required by the unit trace and positivity property, respectively. Explicitly computing the QFIM gives
       \begin{align}
        F(0)_{i,j}&= 2\langle J_{i}J_{j}+J_{j}J_{i}\rangle_{\ket{\psi}}\nonumber \\
           &= {1\over 2}\langle \sum_{\ell,\ell '=1}^{N}\sigma_{i}^{(\ell)}\sigma_{j}^{(\ell ')} +\sigma_{j}^{(\ell)}\sigma_{i}^{(\ell ')} \rangle_{\ket{\psi}}\nonumber \\
           &= N\delta_{i,j}\text{tr}\rho^{(1)}+N(N-1)\text{tr}\left[\rho^{(2)}\sigma_{i}\otimes \sigma_{j}\right] 
       \end{align}      
       so that
       \begin{equation}
           F(0)=  \left( N+Nb(N-1) \right)\mathbb{I}_{3}
       \end{equation}
       where we used the expression for $\rho^{(2)}$ and the fact that $\text{tr}\left[ \text{SWAP}\sigma_{i}\otimes \sigma_{j}\right] = 2\delta_{i,j}$. The minimal value of $\text{tr}F(0)^{-1}$ is obtained for the maximal value $b=1/3$, i.e., $\rho^{(2)}={\mathbb{I}_{2}\otimes \mathbb{I}_{2}+\text{SWAP}\over 6}$ which is the second condition of (\ref{eqn:rdms}), and takes the stated value. 

       The ``if'' part of the lemma is a consequence of  Ref.\cite{PhysRevLett.116.030801}.
    \end{proof}

Next, we prove that the state that satisfies \cref{eq:condition_theta_0} has the one- and two-qubit reduced states in \cref{eqn:rdms}.
From the condition $\langle J_i \rangle =0$, it is straightforward to find that $\langle \sigma_i \rangle=0$ for $i=\{1,2,3\}$, which requires the one-qubit reduced state to be
\begin{equation}
    \rho^{(1)}=\frac{\mathds{1}}{2}.
\end{equation}
To verify that the two-qubit reduced state is given by
\begin{equation}
    \rho^{(2)}= \frac{1}{4} \mathds{1}\otimes \mathds{1}+\frac{1}{12}\sum_{i=1}^3 \sigma_i \otimes \sigma_i \nonumber 
\end{equation}
note that the first quantum metrology condition $\langle J_{i}\rangle=0$ restricts $\rho^{(2)}$ to have the form
\begin{equation}
   \rho^{(2)}= \alpha {\mathbb{I}_{2}\otimes \mathbb{I}_{2}\over 4} + \sum_{i,j}\beta_{i,j}\sigma_{i}\otimes \sigma_{j}.
\end{equation}
But (\ref{eqn:utut}) implies that $\beta_{i,j}={\delta_{i,j}\over 12}$, and the requirement that $\text{tr}\rho^{(2)}=1$ gives $\alpha = {1\over 4}$.
Thus the state that satisfies \cref{eq:condition_theta_0}  has the one- and two-qubit reduced states as in \cref{eqn:rdms}.

\section{SU(2) invariance imposed by quantum metrology conditions}
\label{sec:nonzero-theta}

Here we address the question of whether states satisfying the quantum metrology conditions in \cref{eq:condition_theta_0}, and hence optimal for measuring perturbations away from a fiducial angle $\vec{\theta}=0$, are also optimal for measuring perturbations away from a fiducial angle $\vec{\theta}\neq0$.
A natural question is whether one might translate the optimal QFIM at $\vec{\theta}=0$ to $\vec{\xi}\neq 0$ by applying the inverse unitary $U^\dagger(\vec{\xi})$ to an initial optimal probe state. The parametrized state would then have the form $U(\vec{\theta})U^{\dagger}(\vec{\xi}ß)\ket{\psi}$.
Attempting this, one finds that the failure of the  unitary $U(\vec{\theta})$ to commute with the perturbations we want to measure precludes this naive strategy.
In fact, for states that are optimal at $\vec{\theta}=0$, the QFI remains unchanged when transforming the initial state by an $\SU(2)$ unitary.

To see this, we recall from \cite{PhysRevLett.116.030801} that the QFI for a given state can be expressed as
\begin{align}
    F(\vec{\theta})_{j,k}
    &=
    4\mathrm{Re}\big[\langle A^{(j)}A^{(k)}\rangle-\langle A^{(j)}\rangle\langle A^{(k)}\rangle\big]
    \,,
\end{align}
with $A^{(j)}$ defined as in \cref{eqn:aaaa}.
Since the $A^{(j)}$ do not depend on the initial state, we write the QFI in factorized form, expressing the $A^{(j)}$ through a vector $\vec{A}^{(j)}$ according to
\begin{align}
    A^{(j)}
    &=
    \vec{A}^{(j)}\cdot \vec{J}
    \\
    F(\vec{\theta})_{j,k}
    &=
    4\sum_{m,n=1}^{3}A^{(j)}_m A^{(k)}_n
    \text{Re}\big[
    \langle J_m J_n\rangle
    -\langle J_m\rangle\langle J_n\rangle
    \big]
    \\
    &=
    4\vec{A}^{(j)}\cdot\mathrm{Cov}(\vec{J})\cdot \vec{A}^{(k)}\,.
\end{align}
For an optimal state at $\vec{\theta}=0$ we have from \cref{eq:condition_theta_0} that $\mathrm{Cov}(\vec{J})$ is proportional to the identity, and since this covariance matrix transforms under an $\SU(2)$ transformation of the initial state as
\begin{align}
    \mathrm{Cov}(\vec{J})
    \mapsto
    R^{-1}(\vec{\theta})\mathrm{Cov}(\vec{J})R(\vec{\theta})
    \,,
\end{align}
with $R(\vec{\theta})\in SO(3)$, the QFI remains unchanged under all $\SU(2)$ transformations to the initial state:
\begin{align}
    F(\vec{\theta})_{j,k}
    &=
    4\vec{A}^{(j)}\cdot\vec{A}^{(k)}\langle J_z^2\rangle
    \\
    &=
    4\vec{A}^{(j)}\cdot\vec{A}^{(k)}J(J+1)/3
    \,.
\end{align}

\section{Qudits with $d=4$}
\label{sec:qudits_d_4}
Consider the case of qudits with $d=4$. Define the system of $4\times 4$ matrix units $E_{ij}$ with matrix elements $[E_{ij}]_{st}= \delta_{i,s}\delta_{j,t}$ which satisfy $E_{ij}E_{kl}=\delta_{j,k}E_{il}$. One then notes the following Hermitian basis of the Lie algebra $\mathfrak{u}(4)$ \cite{vilenkin}
\begin{align}
    E_{kk} &{} \; , \; k=1,\ldots,4 \nonumber \\
    Y_{ij}&:= iE_{ij} - iE_{ji} \; , \; i<j \nonumber \\
    X_{ij}&:= E_{ij} + E_{ji} \; , \; i<j
\end{align}
We consider the four-parameter metrology problem defined by the unitary 
\begin{equation}U(\vec{\theta}):= e^{i(\theta_{1}X_{12} + \theta_{2}X_{24} + \theta_{3}X_{34} + \theta_{4}X_{13})}.\end{equation} Note that the generators have the $Z_4$ symmetry 
\begin{align}
    W^{\dagger} X_{12}W &= X_{24} \nonumber \\
    W^{\dagger} X_{24}W &= X_{34} \nonumber \\
    W^{\dagger} X_{34}W &= X_{13} \nonumber \\
    W^{\dagger} X_{13}W &= X_{12} \nonumber \\ 
\end{align}
where the unitary $W$ is,
\begin{widetext}
    \begin{align}
        W&:= \begin{pmatrix}
            0&1&0&0\\
            0&0&0&1\\
            1&0&0&0\\
            0&0&1&0
        \end{pmatrix}  = \exp\left[ i{\pi \over 4} \left( \vphantom{\sum_{0}^{4}} X_{12}+X_{13}+X_{24} + X_{34} - Y_{12} - Y_{24} + Y_{13}+Y_{34}  +\sum_{k=1}^{4}E_{kk}  \right)\right].
\end{align}
\end{widetext}
It is convenient to relabel $X_{0}:= X_{12}$, $X_{1}:=X_{24}$, $X_{2}:=X_{34}$, $X_{3}:=X_{13}$.
The general structure of the QFIM for a probe state invariant under $Z_{4}$ up to a phase can be predicted by noting that the set of QFIM elements $\lbrace F(0)_{i,j} \rbrace_{i,j}$ with $F(0)_{i,j}=2\langle [X_{i}-\langle X_{i}\rangle,X_{j}-\langle X_{j}\rangle]_{+}\rangle$, can be partitioned into three disjoint sets of equal numbers $\lbrace F(0)_{i,i}\rbrace_{i} \sqcup \lbrace F(0)_{i,i\oplus 1}\rbrace_{i} \sqcup \lbrace F(0)_{i,i\oplus 2}\rbrace_{i}$ with $\oplus$ symbolizing modular addition. Therefore, the QFIM is a circulant matrix of the form
\begin{equation}
    F(0)=\begin{pmatrix}
        a & b& c & b \\
        b & a & b & c \\
        c & b & a & b \\
        b & c & b &a 
    \end{pmatrix}.
    \label{eqn:qfim2}
\end{equation}
The eigenvalue of the matrix are given as,
\begin{equation}
\begin{aligned}
\lambda_0&=a+2b+c\\
\lambda_1&=a-c\\
\lambda_2&=a-c\\
\lambda_3&=a-2b+c
\end{aligned}
\end{equation}
The normalized trace of the inverse of the Fisher information matrix at $\vec{\theta}=0$ is calculated as,
\begin{align}
{f}(a,b,c)&:=\mathrm{Tr}(F(0)^{-1}) \nonumber \\
&=\frac{2}{a-c}+\frac{1}{a+2b+c}+\frac{1}{a-2b+c}
\end{align}
To find the minimum of $f(a,b,c)$, first we differentiate the above equation with respect to $b$ which gives as,
\begin{equation}
\frac{\partial {f}(a,b,c)}{\partial b}=\frac{2}{(a-2 b+c)^2}-\frac{2}{(a+2 b+c)^2},
\end{equation}
which on solving we get $b=0$. 
Similarly, we get,
\begin{equation}
\frac{ \partial {f}(a,0,c)}{\partial c}=\frac{2}{(a-c)^2}-\frac{2}{(a+c)^2},
\end{equation}
which on solving we get $c=0$. 
Therefore, in an irreducible representation of $U(4)$ that contains states such that $F(0)$ has the form (\ref{eqn:qfim2}) with $b=c=0$, the optimal state will be obtained by taking $a$ to have the maximal value allowed in that representation, similar to how a multiple of the eigenvalue of the Casimir operator $\sum_{i=1}^{3}J_{i}^{2}$ gives the value $c$ in \cref{eq:condition_theta_0} for $\SU(2)$. By noting that the Casimir invariant $\mathfrak{C}_{2}$ of $\mathfrak{u}(4)$ is a sum of positive operators in the quadratic sector of the universal enveloping algebra, one obtains ${\mathfrak{C}_{2}\over 4}$ as an upper bound for $a$.
The condition for optimality just derived is what we refer to as the quantum metrology condition 
\begin{align}
\langle X_{i} \rangle &= 0\nonumber \\
\langle X_{i}^{2}\rangle &=a \nonumber \\
    \langle X_{i}X_{i\oplus j}\rangle &=0 \; , \; j=1,2 .
        \label{eqn:qmcfull}
\end{align}
Our method for identifying irreducible representations of $U(4)$ which contain probe states that satisfy $F(0)=a\mathbb{I}_{4}$ involves minimally extending the $Z_{4}$ symmetry of the generators to a group $G$ such that invariance under $G$ is sufficient for the conditions (\ref{eqn:qmcfull}) to be satisfied. Noting that the unitary operator
\begin{align}
    Z:= e^{-i{\pi \over 2}(E_{11}-E_{22}-E_{33}-E_{44})}
\end{align}
implements the following symmetry
\begin{align}
    Z^{\dagger}X_{0}Z &= -X_{0} \nonumber \\
    Z^{\dagger}X_{1}Z &= X_{1} \nonumber \\
    Z^{\dagger}X_{2}Z &= X_{2} \nonumber \\
    Z^{\dagger}X_{3}Z &= -X_{3}
\end{align}
we find that a probe state that is invariant under the finite group
\begin{equation}
    G:= \langle W,Z \, \vert \, W^{4}=Z^{4}=(ZW)^{4}=\mathbb{I}\rangle
\end{equation}
is guaranteed to satisfy the quantum metrology condition (\ref{eqn:qmcfull}). Such probe states exist in representations of $U(4)$ that contain trivial irreps of $G$, analogous to how optimal probe states for the three-parameter $\SU(2)$ problem were identified by finding trivial irreps of $A_{4}$ and $S_{3}$ in the totally symmetric representations of $\SU(2)$.

\onecolumngrid
\bibliography{rot.bib}

\end{document}